\pdfoutput=1
\documentclass[pdftex,pra,aps,twoside,nofootinbib,showkeys,showpacs,twocolumn]{revtex4}
\usepackage{setspace}
\usepackage{float}
\usepackage{indentfirst}
\usepackage{url}
\usepackage[stable]{footmisc}
\usepackage{fancyhdr}
\usepackage{appendix}
\usepackage{dsfont}
\usepackage{epstopdf}
\usepackage[all]{xy}
\usepackage{ifpdf}
\usepackage{color}
\ifpdf
 \usepackage[pdftex]{graphicx}
 \else
\usepackage{graphicx}
 \fi
\usepackage{graphicx}
\usepackage[T1]{fontenc}
\usepackage{textcomp}
\usepackage{newcent}
\usepackage[sc,noBBpl]{mathpazo}
\usepackage{amsmath,amsfonts,amssymb,amsthm, graphics}
\usepackage[mathscr]{eucal}
\usepackage{appendix}

\theoremstyle{plain}
\newtheorem{theorem}{Theorem}
\newtheorem{lemma}[theorem]{Lemma}
\newtheorem{proposition}[theorem]{Proposition}
\newtheorem{corollary}[theorem]{Corollary}
\newtheorem{conjecture}[theorem]{Conjecture}

\newtheorem{fact}[theorem]{Fact}

\newtheoremstyle{note}{\topsep}{\topsep}{\slshape}{}{\scshape}{}{ }{}
\theoremstyle{note}

\newtheorem{remark}[theorem]{Remark}
%

%
%
%
%




%

%
%

%

%
%
%

%

%
%

%
%

%


%

%
%
%
%

%
%


\newcommand\tr{\operatorname{Tr}}

\newcommand\id{\operatorname{\mathrm{Id}}}

%

%
%

%
%

%
%

%
%

%

\newcommand{\<}{\langle}
\renewcommand{\>}{\rangle}

\newcommand{\e}{\mathrm{e}}
\newcommand\hcal{\cH}
\newcommand\be{\begin{equation}}
\newcommand\ee{\end{equation}}
\newcommand\bea{\begin{array}}
\newcommand\eea{\end{array}}
\newcommand\ben{\begin{eqnarray}}
\newcommand\een{\end{eqnarray}}
\newcommand\ot{\otimes}

\newcommand\bei{\begin{itemize}}
\newcommand\eei{\end{itemize}}
\newcommand\bee{\begin{enumerate}}
\newcommand\eee{\end{enumerate}}

\newcommand{\E}{\operatorname{e}}

\newcommand{\mathsym}[1]{{}}
\newcommand{\unicode}[1]{{}}

\def\<{\langle}
\def\>{\rangle}
\def\ot{\otimes}
\def\ermax{E_R^{\max}}
\def\esmax{E_S^{\max}}
\def\ecal{{\cal E}}
\def\hcal{{\cal H}}
\def\ecalr{{\ecal_R}}
\def\etaE{\eta_{E-E_S}^R}

\def\ideta{\eta}
\def\rhor{\rho_R}
\def\rhos{\rho_S}

\def\poscrit{DMP}
\def\dampmat{Damping Matrix}
\def\gto{Enhanced Thermal Operations}

\newsavebox{\smlmat}
\savebox{\smlmat}{$\left[\begin{smallmatrix}p&\alpha \\\alpha^{*}&\widetilde{p}\end{smallmatrix}\right]$}

\begin{document}
\title{Towards fully quantum second laws of thermodynamics: limitations on the evolution of quantum coherences}

\author{Piotr \'Cwikli\'nski$^{1,2}$, Micha{\l} Studzi\'nski$^{1,2}$,
Micha{\l} Horodecki$^{1,2}$ and Jonathan Oppenheim$^{3}$}
\affiliation{
$^1$ Institute of Theoretical Physics and Astrophysics, University of Gda\'nsk, 80-952 Gda\'nsk, Poland \\
$^2$ National Quantum Information Centre of Gda\'nsk, 81-824 Sopot, Poland \\
$^3$ Department of Physics and Astronomy, University College of London, and London Interdisciplinary Network for Quantum Science, London WC1E 6BT, UK
}

\date{\today}
\pacs{05.30.-d 03.65.Ta 03.67.-a 05.70.Ln}
\keywords{quantum thermodynamics; equilibrium; thermal operations; qubit}

\begin{abstract}
The second law of thermodynamics places a limitation into which states a system can evolve into. For systems in contact with a heat bath, it can be combined with the law of energy conservation, and it says that a system can only evolve into another if the free energy goes down. Recently, it's been shown that there are actually many second laws, and that it is only for large macroscopic systems that they all become equivalent to the ordinary one. These additional second laws also hold for quantum systems, and are, in fact, often more relevant in this regime. They place a restriction on how the probabilities of energy levels can evolve. Here, we consider additional restrictions on how the coherences between energy levels can evolve. Coherences can only go down, and we provide a set of restrictions which limit the extent to which they can be maintained. We find that coherences over energy levels must decay at rates that are suitably adapted to the transition rates between energy levels. We show that the limitations are matched in the case of a single qubit, in which case we obtain the full characterization of state-to-state transformations. For higher dimensions, we conjecture that more severe constraints exist. We also introduce a new class of thermodynamical operations which allow for greater manipulation of coherences and study its power with respect to a class of operations known as thermal operations.
\end{abstract}

\maketitle

We consider a quantum system in state $\rho_S$ which can be put in contact with a reservoir at temperature $T$. The second law of thermodynamics, combined with the first law (conservation of energy) states that the free energy
\begin{align}
F=\tr(H\rho_S)-TS(\rho_S)
\label{eq:freeenergy}
\end{align}
can only decrease, where $S(\rho_S)$ is the von Neumann entropy\footnote{One can also take the course grained entropy, but since we are interested in small quantum systems where all degrees of freedom can be precisely measured in principle, taking the von Neumann entropy will help us answer the questions we're interested in here.} $S(\rho)=-\tr\rho\log\rho$ and $H$ is the system's Hamiltonian. Although this is a necessary constraint on what state transformation are possible, we now know that it is not sufficient. For transitions between two states, diagonal in the energy basis, there are a set of necessary and sufficient conditions which must be satisfied in order for a state to transform into another state. One has a family of free energies in the case of catalytic processes\cite{2013arXiv1305.5278B} (i.e. where one is allowed an ancilla
which can be returned to its original state in the spirit of Clausius-Planck formulation of the Second Law). For non-catalytic transformations, the set of necessary and sufficient conditions were proven to be majorization\cite{horodecki_reversible_2003} in the case when the Hamiltonian is $H=0$ and thermo-majorization\cite{Horodecki_2013thermo} in general. It is only in the thermodynamic limit that all these conditions become equivalent to the ordinary second law
of equation \eqref{eq:freeenergy}. However, for single finite systems (sometimes called the single-shot scenario), the full set of conditions are relevant. It is this finite regime which is more relevant for quantum systems or
even in the meso-scopic regime, especially if long range interactions are present.

Regarding states which are not diagonal in the energy basis, thermo-majorization (or the generalised free energies of \cite{2013arXiv1305.5278B} in the catalytic case) are still necessary conditions for state transformations and
place conditions on the diagonal entries of the state $\rho_S$ (where we assume that $\rho_S$ is written in the energy eigenbasis).
But these conditions do not say anything about how off-diagonal elements between different energies behave. 
Partial results were obtained in \cite{Horodecki_2013thermo,2013arXiv1305.5278B} for the case where only the input state is non-diagonal
and simultaneously posted with this paper, in \cite{2015NatCo...6E6383L}, where relations between purity and quantum asymmetry in the spirit of coherences have been formulated and the authors obtained the "free-energy relation" for coherences (second law). However finding a complete set of quantum limitations is still a challenge.
Here, by providing a first systematic approach to coherences, we will present a set of conditions, called Damping Matrix Positivity (\poscrit). Unlike the results of \cite{Horodecki_2013thermo,2013arXiv1305.5278B}, they are not necessary and sufficient, although we will show that they are for the case of a qubit.

Since we are interested in studying fundamental limitations, we allow for the experimenter to perform the largest possible set of operations within the context of thermodynamics. Namely, we allow them to have access to a heat 
bath at temperature $T$, and to perform any unitary. Since the laws of physics must conserve energy, and we are interested in how energy flows in thermodynamics, the unitary should conserve the total energy of all systems it acts on
but this is the only restriction.  This provides a precise definition of what we mean by thermodynamics, recasting it as a resource theory known as Thermal Operations (TO). This was introduced in \cite{janzing_thermodynamic_2000} (cf. \cite{Streater_dynamics}) and applied later in \cite{Horodecki_2013thermo, 2011arXiv1111.3882B} where the addition of a work system enabled one to compute the work required for a state transformation.
We will define these operations more carefully and then derive the restrictions they impose on state transformations. In particular, we will present our conditions, and
discuss it in details for qubits, where we see that \poscrit\ is a necessary and sufficient condition for state transformations.
As a result, we fully characterize the qubit-qubit case, as well as provide
limitations for higher dimensional states.
Then, we introduce a new class of operations we call \gto, and study its power with respect to TO. They appear to be more powerful, in that for them, \poscrit\ are necessary and sufficient conditions for state transformations, while for TO we believe \poscrit\ are
not sufficient.  At least in the qubit case, TO can be describe by three conditions: completely positive trace preserving, some commutation relation, and preservation of the Gibbs state. We obtain a part of our findings by adapting results for studies of the weak-coupling between the system and the heat-bath, and dynamical semi-groups \cite{Roga2010, Alicki_book, Alicki_76, Streater_dynamics}.

{\it Thermodynamics as a resource theory and Thermal Operations}
In order to derive any laws of thermodynamics we need to say what thermodynamics is -- in other words, define the class of operations which constitute thermodynamics.
Thermodynamics is then viewed as a resource theory \cite{janzing_thermodynamic_2000, horodecki_reversible_2003, Horodecki_2013thermo,2011arXiv1111.3882B, gour2013resource}.
In the resource theory, one considers some class of operations, and then asks how much of some resource can be used to perform the desired task and how this resource can be manipulated. In the case of thermodynamics, it is viewed as a theory involving state transformations in the presence of a thermal bath. To describe it, one can then exploit some mathematical machinery from single-shot information theory, where one does not have access to many copies of independent and identically distributed bits of information\cite{RR-phd}. 

\begin{figure}
  \includegraphics[width=0.45\textwidth]{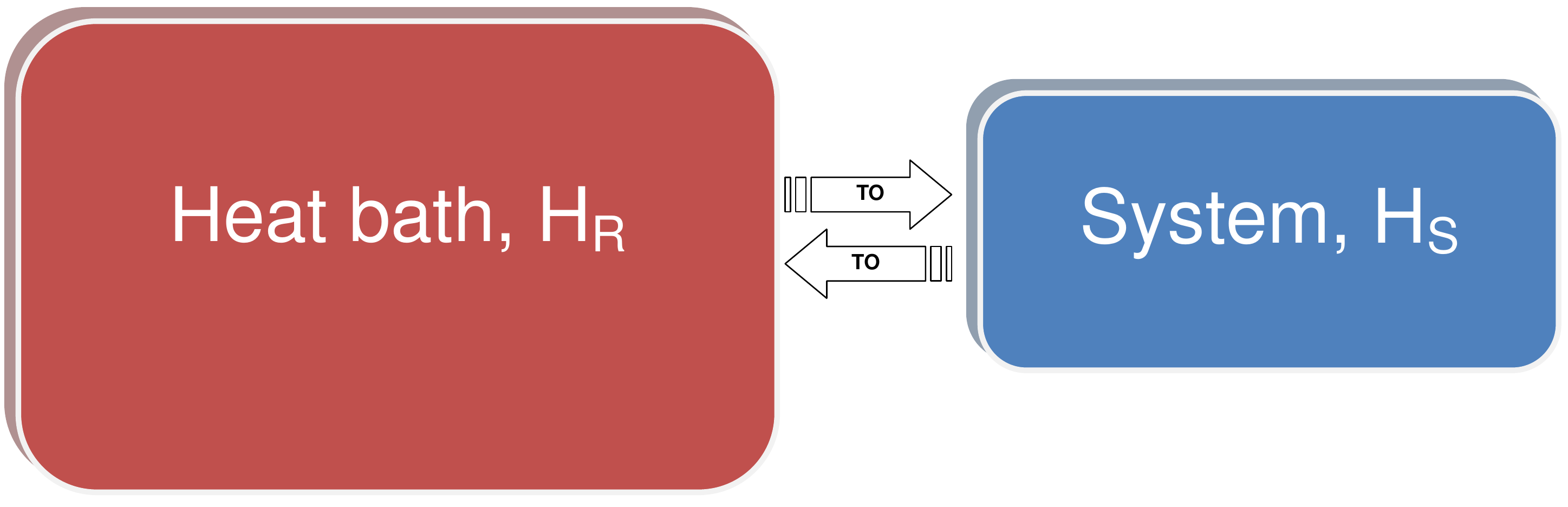}
  \caption{One considers a system $S$ in a quantum state $\rho_S$ with a fixed Hamiltonian $H_S$, in contact with the thermal reservoir (heat bath) $R$ in a Gibbs state $\tau_{R}$ (possibly many copies of it) with Hamiltonian $H_R$ - acting as a free resource. Interactions (white arrows) between them, are implemented under the paradigm of Thermal Operations (TO), i.e., by energy preserving, unitary operations $U$, commuting with the total Hamiltonian. The goal is to obtain some other state $\sigma_{S}$. The energy spectrum of the bath is highly degenerated (small or no degeneracy drastically reduce the set of Thermal Operations) and its maximal energy will tend to infinity. We also make an assumption that the dimension of the bath is much larger than that of the system. Moreover, initially, the total system is in the product state of the bath and system $\tau_R \ot \rho_S$.}
\label{fig:thermalop}
\end{figure}

We wish to explore fundamental limitations on
state transformations, therefore we should allow the experimenter to perform any unitary transformation.  
However, any interaction allowed in nature has to conserve energy if we consider the total system, and since
thermodynamics requires precise accounting of all sources of energy, the unitary must conserve energy. Of course,
we often consider adding interaction Hamiltonians, or performing unitaries which do not conserve energy, but this is
only because we are ignoring degrees of freedom which, if there change in energy was taken into account, would restore energy conservation. Here, we need to include these additional systems, not only because we want to account for
all sources of energy, but because we want to understand coherences and these additional systems may contain coherences which could be transferred to our system. We thus consider all systems with coherence as being part of the 
system. Indeed all the standard thermodynamical paradigms we are interested in, can be made to fall within thermodynamics in this manner~\cite{2011arXiv1111.3882B}. We can thus use Thermal Operations (TO)to study fundamental limitations on the manipulation of coherences.  
The TO paradigm preassumes that there is a heat bath, described by a Gibbs state and helps to describe what can happen with a system which can interact with the heat bath. It also treats the microscopic system, without any approximations. 

Formally, under TO one can
\begin{enumerate}
\item Bring in an arbitrary system in a Gibbs state with temperature T (free resource).
\item Remove (discard) any system.
\item Apply a unitary that commutes with the total Hamiltonian.
\end{enumerate}
The class of TO is generated by the unitaries $U$ (which act on the system, bath and other ancillas) which obey the energy conservation condition
\begin{equation}
	[U, H_S + H_R + H_W] = 0,
\end{equation}
where $H_W$ is a work system or a clock, or any other object under consideration besides the system and bath. Equation \eqref{super} is necessary and sufficient if we wish to ensure that energy is conserved on every input state\cite{Horodecki_2013thermo}.  This is natural if we wish to apply our thermal machine on arbitrary unknown quantum states. Thus, an arbitrary Thermal Operation is obtain by the implementation of an energy-preserving $U$ and tracing out the heat bath (see also, Fig. \ref{fig:thermalop}). Precisely, $\Lambda \in$ TO, when
\be \Lambda(\rho_S) = \tr_R (U \tau_R \ot \rho_S U^{\dagger}). \label{lambda} \ee

It is worth noting that it is Eq. \eqref{super} that prevents one from creating coherences over energy levels if one doesn't already start with them. One can extend TO to the case where one is allowed as a resource, a reference frame which acts as a source of infinite coherence, and in such a case, one can lift the superselection imposed by Eq. \eqref{super} (in the context of thermodynamics, see \cite{2011arXiv1111.3882B,Aberg_2013}).

{\it Allowed state-to-state transformations under Thermal Operations}
As we already mentioned TO cannot create coherent superpositions between
eigenstates, but what are the ultimate limitations for a general $(\rho, H_S) \rightarrow (\sigma, H_S)$
state-to-state transition? In \cite{Horodecki_2013thermo} necessary and sufficient
conditions, in terms of monotones, have been put forward for the block diagonal entries of a state written in the energy basis. These conditions are discussed in Supplemenatry Note. Here, by noticing some general properties of TO, we will provide bounds for off-diagonal elements - {\it coherences} under the assumption that the system Hamiltonian $H_S$ has non-degenerate Bohr spectrum, i.e., there are no degeneracies in the nonzero differences of energy levels of the Hamiltonian. To obtain some of our results, we will adapt results derived for open systems, precisely, for Davies maps under weak-coupling \cite{Roga2010, Alicki_book, Alicki_76}.

{\it Properties of Thermal Operations}
Let us examine properties of TO. 
First, the diagonal elements of a density matrix are not mixed with off-diagonal ones during evolution under TO (i.e. they evolve independently). Moreover, for systems having non-degenerated Bohr spectra, coherences are not mixed among themselves. We can thus say that TO are {\it block-diagonal}, i.e, for an off-diagonal (diagonal) element $|i\>_S\<j|$ ($|i\>_S\<i|$) of state $\rho_S$ one gets (for proofs, see, Supplementary Note 2)
\be \Lambda (|i\>_S\<j|) = \alpha_{ij} |i\>_S\<j|, i \neq j \label{main1} \ee
and
\be \Lambda (|i\>_S\<i|) = \sum_{ij} p(i \rightarrow j) |j\>_S\<j| \label{main2}, \ee
where $\Lambda$ is defined as in Eq. \eqref{lambda}, $\alpha_{ij}$ are factors by which the off-diagonal elements are multiplied (damped) during the transition, and $p(i \rightarrow j)$ is a transition
probability of moving element $i$'s into $j$'s and $p(i)$ is a probability of occupying an energy state $i$.

TO are physical operations, so the dynamics should be implemented by completely positive trace preserving maps (CPTP maps). Together with the fact that under TO, the Gibbs state is preserved, we have a set of properties fulfilled by TO. It is known \cite{Roga2010, Alicki_book} that these properties are also satisfied by Davies maps appearing in the weak-coupling regime for Hamiltonians with non-degenerate Bohr spectra. Using the above properties, we obtain constraints for the behavior of coherences. We thus get bounds for off-diagonal elements which are determined by the probability for staying in the same energy-level.

{\it Quantum states - second laws for off-diagonal elements}
We will now use the above properties of TO, to study allowed states transitions. From the property given by Equation  \eqref{main1}-\eqref{main2}, we obtain that there exist two families of bounds, one for diagonal elements of states (thermo-majorization) and the second one for coherences.

Suppose now, that somehow we can transform the diagonal of an input $d-$ level state into another $d-$level state with some other diagonal entries. Our main question is then: {\it How does this process affect coherences, i.e., the off-diagonal elements?}

To answer it, let us use other properties of TO. As shown in Ref \cite{Roga2010}, the property of CPTP combined with formulas \eqref{main1} and \eqref{main2} imply that the following matrix must be positive:
\be
\begin{bmatrix} \label{main}
p(0 \rightarrow 0) & \alpha_{01} & \ldots & \alpha_{0 n-1}\\
\alpha_{10} & p(1 \rightarrow 1) & \ldots & \alpha_{1 n-1}\\
\vdots & \vdots & \ddots & \vdots \\
\alpha_{n-1 0} & \alpha_{n-1 2} & \ldots & p(n-1 \rightarrow n-1)
\end{bmatrix} \geq 0.
\ee
We will call the above matrix the \dampmat, and the above condition, Damping Matrix Positivity or \poscrit.
Let us note, that the matrix is crucial for processing coherences. Indeed, positivity implies that the damping factors in particular satisfy
\be
\label{minors}
|\alpha_{ij}| \leq \sqrt{p(i \rightarrow i)p(j \rightarrow j)}.
\ee
Thus, the coherences must be damped at least by a factor $\sqrt{p(i \rightarrow i)p(j \rightarrow j)}$ that comes from the $2 \times 2$ minors of the matrix from Eq. \eqref{main}. Since the present paper appeared on the arXiv, the formulas has been generalized to the case of an arbitrary spectrum in \cite{Jennings_thermo}. In the subsequent section we will show that for qubits, this is the only constraint for processing coherences by TO.

{\it Qubit example}
For qubits, we have necessary and sufficient sets of criteria, by showing that for a
given process on a diagonal, the damping factor (for coherences) from Eq. \eqref{minors} can always be equal to $\sqrt{p(i \rightarrow i)p(j \rightarrow j)}$. We will determine this optimal factor for an arbitrary $\rho \rightarrow \sigma$ transition.

Going into details, consider two states $\rho_S = \begin{bmatrix}
p & \alpha\\
\alpha^{*} & 1-p
\end{bmatrix}$ and $\sigma_S = \begin{bmatrix}
q & \chi\\
\chi^{*} & 1-q
\end{bmatrix}$, written in the energy eigenbasis of the Hamiltonian system $H_S$, where $*$ stands for complex conjugation. We know that the evolution of diagonal elements can be separated from off-diagonal ones, so for the former one uses thermo-majorization (leading to four different situations discussed in Supplementary Note 3). For the latter we obtain that the decaying rate of coherences depends only on the diagonal transition rates (and consequently, on elements of states and energies associated with the Hamiltonian of the system). Namely, coherences obey the following inequality

\be |\chi| \leq  |\alpha| \kappa, \label{eq:main} \ee where
\be \kappa =  \frac{\sqrt{(q-\widetilde{p}\E^{\beta \Delta E})(p-\widetilde{q}\E^{\beta \Delta E})}}{\left| p-\widetilde{p}\E^{\beta \Delta E} \right|},
\ee
$\widetilde{q} = 1 - q$, $\widetilde{p} = 1 - p$, $\E^{\pm\beta \Delta E} = \E^{\pm\beta (E_j-E_i)}$ with $E_i$ being the energy of the system and $\beta = \frac{1}{kT}$ the inverse temperature. Note that the phases commute with the total Hamiltonian of our setup, so we can restrict our attention only to moduli of the coherences. We have necessary and sufficient conditions for arbitrary qubit $\rho \rightarrow \sigma$ transitions under TO, where for \textit{a)} diagonal elements we use thermo-majorization, \textit{b)} for coherences, Eq. \eqref{eq:main} (in Supplementary Note 4 we show that it can be achieved with equality). In the appendix, we also express our damping factor in terms of relaxation times ($T_2$) \cite{Gorini1978149, PhysRevA.66.062113}.

{\it Sufficiency of the second laws?}
It is clear that for an arbitrary transitions, there are many stochastic maps that lead to the same final state and each such map can be implemented by possibly many unitary transformations. We need such unitaries that damp as little as possible, the off-diagonal elements of the density matrix - for which, the inequalities coming from the $2\times 2$ minors of the Choi map from Eq. \eqref{minors} are all saturated. This would optimise the preservation of coherences. But, is it always possible? As we have shown, for qubits, for every state-to-state transition, we have only one channel that realizes it and we can always make the inequality that gives us a dumping factor for coherences tight. This uniqueness of channel may be not true anymore for higher dimensional states. In Supplementary Note 6, we choose a qutrit state-to-state transition $(0, \frac{1}{2}, \frac{1}{2})\rightarrow(\frac{\E^{-\beta \Delta E_{21}}}{2}, \frac{1- \E^{-\beta \Delta E_{21}} + \E^{-\beta \Delta E_{20}}}{2}, \frac{1 - \E^{-\beta \Delta E_{20}}}{2})$, which can only be realised by a unique set of transition probabilities. For this set of transition probabilities, one is not able to find a unitary map that at the same time realizes the exact states transition and leads to the saturation of bounds for coherence preservation. 

{\it Enhanced Thermal Operations: a class of operations which saturate the \poscrit criteria}
As we already have observed, in the case when one considers Hamiltonians with nondegenerated Bohr spectra, the properties of TO 
used in this paper are similar to those occurring when one studies Davies maps for many-level systems. We shall now introduce a class of operations that is defined by these properties. We will call this class \gto.

We define Enhanced Thermal Operations (ETO) in the following way. $\Lambda \in ETO$ when
\begin{enumerate}
\item $[\Lambda, \hat{H_S}] = 0$.
\item It is CPTP.
\item It preserves the Gibbs state.
\end{enumerate}
Here $\hat H_S$ is a superoperator defined   by $ \hat{H}_S(X) = [H_S, X]$ for all operators $X$. 
The first property gives us that under ETO, one is able to realize all possible transformations 
that satisfy the constraints given by  Eq. \eqref{main}. We will use our previous findings to compare the power of these two classes.

We show that for qubits, TO are equal to the new class and as a result we have laws for any state-to-state transition under TO. Essentially, for qubits, TO can already saturate the
bound given by Eq. \eqref{main} and thus do no worse than \gto. For qutrits, we provided a family of initial and final states $\rho$ and $\sigma$ that by \gto, one can transform $\rho$ into $\sigma$ exactly, but it is not possible under TO. Based on this, one can try to conclude that TO are outperformed by \gto, and what is more, state transitions by TO are not equal to the ones under \gto\ (the latter statement is stronger, it could be that the set of TO is smaller than \gto, but both classes lead to the same laws of transformations). However, it is not a conclusive result, because one can try to approximate the channel that is used to realize the transition under TO from the previous section, which may lead to the saturation of the bound for optimal coherences preservation. See also, Fig. \ref{fig:sets}.
\begin{figure}[h]
\begin{center}$
\begin{array}{cc}
\includegraphics[width=0.20\textwidth, height=0.20\textheight]{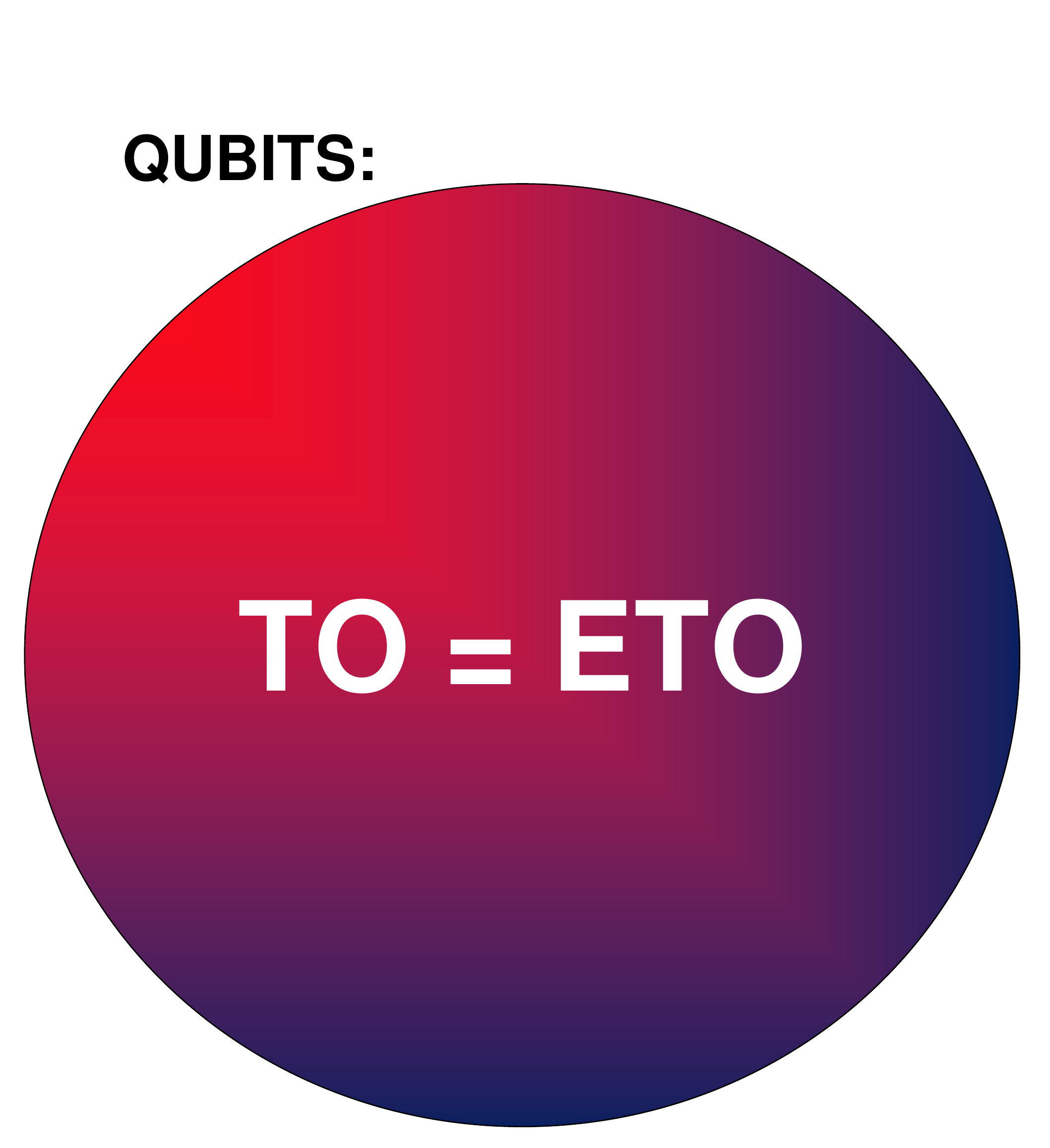} &
\includegraphics[width=0.20\textwidth, height=0.20\textheight]{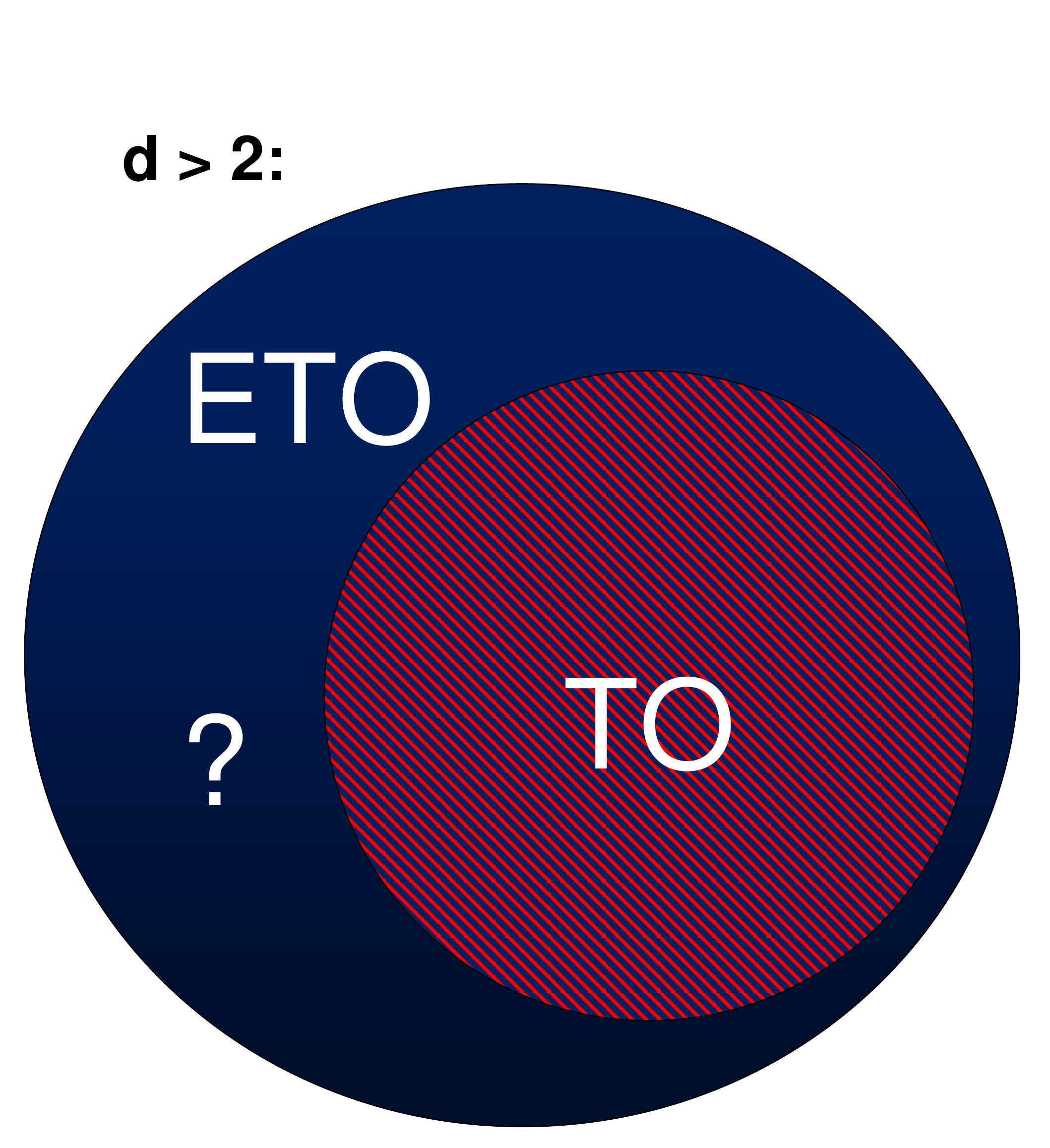}
\end{array}$
\end{center}
\caption{Comparison of Thermal Operations (TO) and Enhanced Thermal Operations (ETO) for qubits ($d=2$) (a) and $d>2$ (b). For $d=2$, they are equivalent, and from the Birkoff primitive given in \cite{2011arXiv1111.3882B}, TO can reproduce not only the extreme maps (when our bounds are saturated) but any other from ETO with an arbitrary precision. On the other hand, when $d>2$, ETO may be a wider class than Thermal Operations. However, we only have a counterexample for an exact qutrit state-to-state transitions, where we find a transition that can be realized by ETO, but not by TO. Studies of approximate states transitions are needed to verify the possible gap between TO and ETO.}
\label{fig:sets}
\end{figure}

{\it Discussion and open questions}
We study the limit of state-to-state transformations under Thermal Operations, focusing mostly on coherences and their preservation. We also introduce a new class of operations - Enhanced Thermal Operations and compared its power to TO. A natural research direction is to study whether they really outperform Thermal Operations if we are only concerned about approximate transformations.

It would be also interesting to check how state transitions look like if we add additional ancillas, and allow them to be returned in approximately the same state as before. Such catalytic thermal operations were studied in \cite{2013arXiv1305.5278B} and depending on the level of approximation, they effectively allow one to only approximately conserve  energy on the system. 

Finally, we have seen that the second laws we have introduced in the form of the \poscrit criteria are not strictly necessary and sufficient limitations on thermodynamical transformations. This likely means that there are more second laws which have to be satisfied. Finding them is an interesting open question.

{\it Acknowledgments}
We want to thank David Reeb for useful discussions. We would also like to thank Anna Studzi\'nska for preparing some of figures. J. O. is supported by the Royal Society. M. S. and P. \'C. are supported by the grant 2012/07/N/ST2/02873 from National Science Centre. M. H and P. \.C are also supported by the Seventh framework programme grant RAQUEL No 323970. M. H. is also supported by the NCN grant Maestro DEC-2011/02/A/ST2/00305. Research was (partially) completed while J.O., M. S. and P. \'C. were visiting the Institute for Mathematical Sciences, National University of Singapore in 2013, and (by all authors) during the program "Mathematical Challenges in Quantum Information" at the Isaac Newton Institute for Mathematical Sciences in the University of Cambridge (2013). Part of this work was done in National Quantum Information Centre of Gda\'nsk.

\widetext
\appendix
\renewcommand{\thesection}{\arabic{section}}
\renewcommand{\thesubsection}{\Roman{subsection}}
\renewcommand{\appendixname}{Supplementary Note}
\section{(Q)Bit of notation and assumptions}
\label{App:not}
Before proving and presenting in details the main results of our work, let us introduce some notation first. First, we will recall facts from \cite{Horodecki_2013thermo} to which we will add some new assumptions and modifications in the end.

Let us define $\ideta^X_E$ as a state of a system X proportional to
the projection on to a subspace of energy $E$ (according to the Hamiltonian $H_X$ on this system).
In particular, $\etaE$ is given by
\be
\etaE=g(E-E_S)^{-1}\sum_g |E-E_S,g\>_R\<E-E_S,g|
\ee
where $g=1,\ldots,g(E-E_S)$ are degeneracies, i.e. $\etaE$ is the maximally mixed state of the reservoir with
support on the subspace of energy $E_R=E-E_S$, $E_R$ are the energies of reservoir and $E_S$ of the system.
We shall also use notation $\eta_K=\id/ K$ where the identity acts on a $K$ dimensional space.

Let us note that the total space $\hcal_R\ot \hcal_S$ can be decomposed as follows
\be
\label{eq:Hoplus}
\hcal_R\ot \hcal_S= \bigoplus_E \left(\bigoplus_{E_S} \hcal^R_{E-E_S} \ot \hcal^S_{E_S} \right)
\ee

Consider an arbitrary state $\rho_{RS}$ which has
support within $\esmax\leq E\leq \ermax$. We can rewrite it as follows
\be
\label{proj}
\rho_{RS}=\sum_E \sum_\Delta P_E \rho_{RS} P_{E+\Delta}\\
\ee
Here $\Delta= -\esmax, \ldots, \esmax$.  The blocks
$P_E \rho_{RS} P_{E+\Delta}$ we can further divide into sub-blocks
\be
P_E \rho_{RS} P_{E+\Delta}=
\sum_{E_S\in I_\Delta} \id_R \ot P_{E_S}  P_E \rho_{RS} P_{E+\Delta} \id_R \ot P_{E_S+\Delta}
\ee
where $I_\Delta=\{0,\ldots, \esmax-\Delta\}$ for $\Delta \geq 0$ and
$I_\Delta=\{-\Delta,\ldots, \esmax\}$ for $\Delta \leq 0$.
The sub-blocks map the Hilbert space $\hcal^R_{E-E_S}\ot \hcal^S_{E_S+\Delta}$ onto
$\hcal^R_{E-E_S}\ot \hcal^S_{E_S} $

We can then extract the state $\rhos$
\be
\rhos=\sum_{E_S,E_S'} P_{E_S} \rhos P_{E_S'}
\ee
as follows:
\be
P_{E_S} \rhos P_{E_S'} = \sum_{E} \tr_{\hcal^R_{E-E_S}}
(P^R_{E-E_S} \ot P_{E_S}  P_E \rho_{RS} P_{E+E_S'-E_S} P^R_{E-E_S} \ot P_{E_S'})
\ee

Let us make some assumptions now.

We can assume that Hamiltonians of all systems of concern have minimal energy zero.
Let $\ermax$, and $\esmax$ be the largest energy of the heat the bath and system, respectively
(of course a typical heat the bath will have $\ermax=\infty$).

Our heat the bath will be large, while our resource states will be small.
This means that the system Hilbert space will be fixed, while the
energy of the heat the bath (and other relevant quantities such as size of degeneracies) will tend to infinity. It is quite an important assumption, because when the energy spectrum of the bath is not highly degenerated, then the set of Thermal Operations is very small and restricted.

\begin{remark}
In principle, Thermal Operations need not satisfy detailed balance; they should merely preserve the Gibbs state as a whole.
\end{remark}

The heat the bath is in a Gibbs state $\tau_R$ with inverse temperature $\beta$. Moreover there exists
set of energies $\ecalr$ such that
the state of the heat the bath occupies energies from $\ecalr$ with high probability, i.e.
for the projector $P_\ecalr$ onto the states with energies  $\ecalr$ we have
\be
\tr P_\ecalr \rhor\geq 1-\delta
\ee
and it  has the following properties:

\begin{enumerate}
\item \label{as1} The energies $E$ in $\ecalr$ are peaked around some mean value, i.e.
they satisfy $E\in \{\<E\>- O(\sqrt{\<E\>}),\ldots \<E\>+O(\sqrt{\<E\>})\}.$

\item For $E\in\ecalr$ the degeneracies $g_R(E)$ scale exponentially with $E$, i.e.
\be g_R(E) \geq e^{c E}, \ee where $c$ is a constant.

\item \label{as3} For any three energies  $E_R,E_S$ and $E_S'$  such that $E_R\in\ecalr$ and
$E_S$, $E_S'$ are arbitrary energies of the system, there exist $E'_R \in\ecalr$
such that $E_R +E_S = E_R'+E_S'$.

\item For $E\in\ecalr$ the degeneracies $g_R(E)$ satisfy $g_R(E-E_S) \approx g_R(E) e^{-\beta E_S}$,
or more precisely:
\be
\left|\frac{g_R(E)e^{-\beta E_S'}}{g_R(E-E_S)}-1\right|\leq \delta
\ee
for all energies $E_S$ of the system $S$.
\end{enumerate}
One can notice that a product $\tau_R^{\ot n}$ of many copies of independent
Gibbs states satisfies the above assumptions.

We then have:
\begin{theorem} \cite{Horodecki_2013thermo}
\label{thm:oplus_E}
We consider the set of energies
\be
\ecal=\{E:E-E_S\in \ecalr\}
\ee
where $\ecalr$ satisfies assumptions for heat the bath listed above. Then
\be
\forall{E\in \ecal} \quad
||\frac{1}{p_E}P_E \rhor\ot\rhos P_{E+\Delta} - \oplus_{E_S}\etaE\ot P_{E_S} \rhos P_{E_S+\Delta}||\leq 2\delta
\label{eq:oplus_E}
\ee
and
\be
\sum_{E\in\ecal} p_E\geq 1 -2\delta
\ee
where $p_E=\tr(P_E \rhor \ot \rhos)$.
\end{theorem}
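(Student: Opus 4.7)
The plan is to reduce everything to a direct block-computation using the fact that $\tau_R=\E^{-\beta H_R}/Z$ is diagonal in the bath energy basis, and then propagate the $\delta$-error from assumption 4. First I would insert the resolution $P_E=\sum_{E_S}P^R_{E-E_S}\ot P^S_{E_S}$ into $P_E(\tau_R\ot\rhos)P_{E+\Delta}$. Because $\tau_R$ commutes with $H_R$, only terms with matched bath energies survive, yielding
\begin{equation*}
P_E(\tau_R\ot\rhos)P_{E+\Delta}=\sum_{E_S}\frac{g_R(E-E_S)\E^{-\beta(E-E_S)}}{Z}\,\etaE\ot P_{E_S}\rhos P_{E_S+\Delta},
\end{equation*}
and tracing gives $p_E=\sum_{E_S}\frac{g_R(E-E_S)\E^{-\beta(E-E_S)}}{Z}\tr(P_{E_S}\rhos)$.

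Next I would apply assumption 4 to replace the $E_S$-dependent weight $g_R(E-E_S)\E^{-\beta(E-E_S)}=g_R(E-E_S)\E^{\beta E_S}\E^{-\beta E}$ by the $E_S$-independent quantity $g_R(E)\E^{-\beta E}$, valid whenever $E-E_S\in\ecalr$ (i.e.\ whenever $E\in\ecal$), at the cost of a multiplicative $(1\pm\delta)$. This common factor cancels between numerator and denominator upon dividing by $p_E$, since $\sum_{E_S}\tr(P_{E_S}\rhos)=1$; what remains is exactly $\oplus_{E_S}\etaE\ot P_{E_S}\rhos P_{E_S+\Delta}$. The first inequality then reduces to a triangle-inequality bookkeeping in trace norm: one factor of $\delta$ enters from approximating the numerator and another from approximating $p_E$ in the denominator, giving the stated $2\delta$ bound.

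For the probability estimate, I would change variables $E_R=E-E_S$ in the outer sum, treating each system energy $E_S$ in the support of $\rhos$ separately. Assumption 3 ensures that for every such $E_S$ the map $E_R\in\ecalr\leftrightarrow E\in\ecal$ is (up to vanishing boundary terms) a bijection, so
\begin{equation*}
\sum_{E\in\ecal}\frac{g_R(E-E_S)\E^{-\beta(E-E_S)}}{Z}=\sum_{E_R\in\ecalr}\frac{g_R(E_R)\E^{-\beta E_R}}{Z}=\tr(P_{\ecalr}\tau_R)\geq 1-\delta.
\end{equation*}
Taking the convex combination in $E_S$ weighted by $\tr(P_{E_S}\rhos)$ then yields $\sum_{E\in\ecal}p_E\geq 1-\delta$, with the remaining slack to $1-2\delta$ absorbing the same approximation slack from assumption 4 that entered in the first part.

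The main obstacle is the careful interpretation of $\ecal$ when $\rhos$ is supported on several $E_S$'s: one must check that shifting $\ecalr$ by different system energies produces essentially the same set (which is precisely what assumption 3 guarantees) and that the $(1\pm\delta)$ approximation of assumption 4 applies uniformly across all blocks, so that summing the block-wise errors does not inflate the stated $2\delta$ bound. Once that uniformity is in hand, everything else is linear-algebraic bookkeeping.
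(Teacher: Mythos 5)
Your computation is correct and is essentially the argument behind this result: the paper itself does not prove Theorem~\ref{thm:oplus_E} but imports it from \cite{Horodecki_2013thermo}, where the proof proceeds exactly as you describe --- expand $P_E(\tau_R\ot\rho_S)P_{E+\Delta}$ over the blocks $\hcal^R_{E-E_S}\ot\hcal^S_{E_S}$, use diagonality of the Gibbs state to kill the off-matched bath energies, and invoke the degeneracy assumption $g_R(E-E_S)\approx g_R(E)\e^{-\beta E_S}$ to make the weights $E_S$-independent, with one $\delta$ from the numerator and one from $p_E$. The only bookkeeping point worth making explicit is that the normalized error coefficient is $|\epsilon_{E_S}-\bar\epsilon|/(1+\bar\epsilon)\le 2\delta/(1-\delta)$, so the stated $2\delta$ holds at leading order (as intended in the source), and that the boundary terms in your change of variables $E_R=E-E_S$ are exactly what the auxiliary set $\ecalr'$ and the replacement $\delta\to\delta'$ at the end of Supplementary Note 1 are introduced to control.
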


All the above is sufficient, when one considers states that are diagonal in energy basis. To deal with coherences, our figure of merit, an additional assumption is needed.

Let us denote the minimal energy of the bath by $E^{R}_{\min}$ and the maximal one by $E^{R}_{\max}$.
\begin{enumerate}
\item
Let us define a new set of energies $\ecalr'$ which is the set $\ecalr$ where we removed all energies that are less (greater) than $E^{R}_{\min} (E^{R}_{\max}) \pm \max E_S$. It has the property that for the projector $P_\ecalr'$ onto the states with energies  $\ecalr'$ we have
\be
\tr P_\ecalr' \rhor\geq 1-\delta'
\ee
\end{enumerate}

Theorem \ref{thm:oplus_E} also holds for this assumption, the only difference is that $\delta$ needs to be replaced by $\delta'$.

\section{Thermal Operations}
Thermal Operations is the class of operations preassumes that there is a heat bath, described by a Gibbs state. So, it is worthwhile to emphasize what TO paradigm can and what it cannot explain. Clearly, the paradigm is not meant to answer how statistical mechanics emerges from reversible mechanics, i.e. how the probabilities emerge (apart from traditional coarse graining, there an interesting more recent approach is to use entanglement). What TO is very good to describe, is what can happen with system which can interact with the heat bath.  And the particular advantage of TO paradigm, is that it also treats the microscopic system, without any approximations.
\label{App:to}
Let us, one more time, present the facts and properties of Thermal Operations. In Thermal Operations one can
\begin{enumerate}
\item Bring in an arbitrary system in a Gibbs state with temperature T (free resource).
\item Remove (discard) any system.
\item Apply a unitary that commutes with the total Hamiltonian.
\end{enumerate}
Mathematically, the class of Thermal Operations on a system with $(\rho_S, H_S)$ can be viewed as
\begin{equation}
	(\rho_S, H_S) \mapsto (\tr_R[U \, (\rho_S \otimes \tau_R^{\ot n} ) \, U^{\dag}], H_S) = (\sigma_S, H_S) \label{CPTP}
\end{equation}
where $\tau_R = e^{- \beta H_R}/Z_R$ is the thermal state of the reservoir for some reservoir Hamiltonian $H_R$ at inverse temperature $\beta$, $\rho_S$ and $\sigma_S$ are some initial and final states from $H_S$. The class is generated by the unitaries $U$ (which act on the system, bath...) which obey the (strong) energy conservation condition
\begin{equation}
\label{super}
	[U, H_S + H_R + H_W] = 0,
\end{equation}
where $H_W$ is the term that a clock, a work system and other object under consideration besides the system and bath. Conservation of the total energy also implies the average energy conservation. In \cite{Horodecki_2013thermo}, the authors, provide an indication of how to generalize the above to the case of time-dependent system Hamiltonian, with the help of an auxiliary system $S'$.
{\textbf{Example:}}
In the case, when one has only the system and the heat bath, the total Hamiltonian is
\be
H_{SR} = H_S \ot I_R + I_S \ot H_R,
\ee
and from the energy conservation relation we have that
\be
[U, H_{SR}] = [U, H_S \ot I_R] + [U, I_S \ot H_R] = 0.
\ee
It means that to have the non-trivial dynamics and state-to-state transitions
\be [U, H_S \ot I_R] = -[U, I_S \ot H_R] \neq 0.
\ee
When both commutators are equal to zero, everything trivialize.

Let us recall the properties of Thermal Operations:
\begin{enumerate}
\item They are completely positive trace preserving maps (CPTP maps).
\item They preserve the Gibbs state.
\item During state-to-state transitions, the diagonal elements of an evolving state are not mixed with the off-diagonal ones, i.e., for any element $|i\>_S\<j|$ of a state $\rho_S$,
\be \label{general}
\Lambda |i\>_S\<j| = \sum_{kl: E_k-E_l=E_i-E_j} a_{kl}^{ij} |k\>_S\<l|,
\ee
where $\Lambda$ is Thermal Operations and $a_{kl}^{ij}$ are factors by which state elements are multiplied after an evolution.
\item For Hamiltonians having non-degenerated Bohr spectra, the off-diagonal elements are also not mixed between themselves, i.e., for an off-diagonal element $|i\>_S\<j|$ one gets
\be \Lambda (|i\>_S\<j|) = \lambda_{ij} |i\>_S\<j|,\ i \neq j \label{main11} \ee
$\lambda_{ij}$ are factors by which the off-diagonal elements are multiplied (damped) during the transition.
\end{enumerate}
{\it Proof of the Properties 1-4.}
The first property comes from Eq. \eqref{CPTP}, since Thermal Operations are implemented by unital dynamics. The second property comes straightforwardly from the commutation relation between energy-preserving unitary matrices and the total Hamiltonian, $\left[U, H\right]=0$ \cite{Horodecki_2013thermo}. We need to focus in details on the last two from the above list. We will prove these attributes of Thermal Operations using the formalism from Supplementary Note \ref{App:not}.

Let us consider an evolution (under energy preserving unitary operation U) of an off-diagonal $|i\>\<j|$ element of a quantum state that acts on the system $S$. Identifying the blocks of fixed energy $E$ and using Eq. \eqref{proj} we get:
\be \label{proof1}
\bigoplus_E U_E \rho_R \ot |i\>\<j|_S U^{\dagger} \bigoplus_{E'} U^{\dagger}_{E'}. \ee
Let us now the state $\rho_R$ in its energy-basis as
\be \rho_R = \sum_{E_R} p(E_R) |E_R\> \<E_R|, \ee
and insert it into Eq. \eqref{proof1} obtaining,
\be \label{st}
\bigoplus_E U_E \sum_{E_R} p(E_R) |E_R\> \<E_R| \ot |i\>\<j|_S U^{\dagger} \bigoplus_{E'} U^{\dagger}_{E'} = \sum_{E_R} p(E_R) \bigoplus_{E,E'} U_E |E_R\>|i\>_S\<E_R|\<j|_S U^{\dagger}_{E'}.
\ee
Let us examine now the action of $U$ on matrix elements of states. The only elements that are going to remain are that whose total energy of the system and bath is equal to $E$ ($E'$). It gives
\be \label{uact}
\begin{split}
&U_E|E_R\>|i\>_S = \sum_k \alpha_{E_R}^{ki} |E_R + E_i - E_k\> |k\>_S\delta_{E, \ E_R+E_i}, \\
&\<E_R|\<j|_S U^{\dagger}_{E'} = \sum_l {\alpha_{E_R}^{lj}}^{\star} \<E_R + E_j - E_l|\<l|_S \delta_{E', \ E_R+E_j}
\end{split}
\ee
where $\delta_{E,E_R+E_i}$ and $\delta_{E', \ E_R+E_j}$ are Kronecker deltas, $E_x$ is the energy of an element $|x\>$, and $\star$ stands for the complex conjugation. Inserting Eq. \eqref{uact} into Eq. \eqref{st} leads to
\be \label{ee1}
\sum_{E_R} p(E_R) \bigoplus_{E,E'} \sum_{kl}  \alpha_{E_R}^{ki} {\alpha_{E_R}^{lj}}^{\star} |E_R + E_i - E_k\> |k\>_S \<E_R + E_j - E_l|\<l|_S \delta_{E,\ E_R+E_i} \delta_{E',\ E_R+E_j}.
\ee
Using Kronecker deltas we get
\be \label{ee2}
\sum_{E_R} p(E_R) \sum_{kl}  \alpha_{E_R}^{ki} {\alpha_{E_R}^{lj}}^{\star} |E_R + E_i - E_k\> |k\>_S \<E_R + E_j - E_l|\<l|_S.
\ee
From Eq. \eqref{ee1}, precisely, from the Kronecker deltas, we get that $E'= E+E_j-E_i = E + \omega_{ij}$. From now on, we will denote $E_x-E_y=\omega_{xy}$, meaning $\omega_{xy}$ is the frequency between levels $x$ and $y$.

We can then rewrite Eq. \eqref{ee1} as
\be
\sum_{E_R} p(E_R) \left(\sum_{kl} \alpha_{E_R}^{ki} {\alpha_{E_R}^{lj}}^{\star} |E_R + \omega_{ki}\> \<E_R + \omega_{lj}| \otimes |k\>\<l|_S\right).
\ee
Applying the partial trave over the heat bath gives
\be \label{ufin}
\sum_{E_R} p(E_R) \left(\sum_{kl} \alpha_{E_R}^{ki} {\alpha_{E_R}^{lj}}^{\star} \<E_R + \omega_{lj}|E_R + \omega_{ki}\> |k\>\<l|_S\right).
\ee
To have non-zero scalar product, $\omega_{ki} = \omega_{lj}$. Keeping in mind that $\omega_{ki}=E_i-E_k$ and $\omega_{lj}=E_j-E_l$, the scalar product is non-zero, iff $E_k-E_l=E_i-E_j$. After the calculation, we get
\be \label{ufin1}
\sum_{kl: E_k-E_l=E_i-E_j} \left( \sum_{E_R} p(E_R) \alpha_{E_R}^{ki}{\alpha_{E_R}^{lj}}^{\star} \right) |k\>\<l|_S = \sum_{kl: E_k-E_l=E_i-E_j} a_{kl}^{ij} |k\>_S\<l|.
\ee
This proves Eq. \eqref{general}. But particulary, for systems having non-degenerate Bohr spectra, we have $E_i \neq E_j$ and $E_k\neq E_l$, then $E_k=E_i$ and $E_l=E_j$ (it corresponds to the situation when both $\omega$'s are equal to 0), which changes Eq. \eqref{ufin1} into
\be
\left(\sum_{E_R} p(E_R) \sum_{i,j} \alpha_{E_R}^i {\alpha_{E_R}^j}^{\star} \right)|i\> \<j|_S = \lambda_{ij}|i\> \<j|_S.
\ee
We get that our off-diagonal element is, after evolution, multiplied by $\Lambda_{ij}$ that depends only on $|i\> \<j|$ and not on any other off-diagonal element, which proves Eq. \eqref{main11}.

Property 3 implies the following
\begin{corollary}
For $\Lambda$ being Thermal Operations as in Eq. \eqref{CPTP}, one has $\left[\Lambda, \hat{H}\right] = 0$, where $\hat{H}(X) = [H, X]$.
\end{corollary}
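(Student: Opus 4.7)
The plan is to verify the operator identity $\Lambda \circ \hat H = \hat H \circ \Lambda$ by evaluating both sides on an arbitrary matrix unit $|i\rangle_S\langle j|$ and using the linearity of $\Lambda$ and $\hat H$ to extend to all operators $X$. Since the energy eigenbasis $\{|i\rangle_S\}$ spans the system Hilbert space, the matrix units form a basis of the space of operators and it is enough to check the identity on these elements.

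First I would compute $\hat H(|i\rangle_S\langle j|)=[H_S,|i\rangle_S\langle j|]=(E_i-E_j)\,|i\rangle_S\langle j|$, which is immediate from $H_S|i\rangle_S=E_i|i\rangle_S$. Applying $\Lambda$ and using linearity gives
\begin{equation}
\Lambda\bigl(\hat H(|i\rangle_S\langle j|)\bigr)=(E_i-E_j)\,\Lambda(|i\rangle_S\langle j|).
\end{equation}
On the other side, by Property 3 (Eq.~\eqref{general}),
\begin{equation}
\Lambda(|i\rangle_S\langle j|)=\sum_{kl:\,E_k-E_l=E_i-E_j} a_{kl}^{ij}\,|k\rangle_S\langle l|,
\end{equation}
so that
\begin{equation}
\hat H\bigl(\Lambda(|i\rangle_S\langle j|)\bigr)=\sum_{kl:\,E_k-E_l=E_i-E_j} a_{kl}^{ij}(E_k-E_l)\,|k\rangle_S\langle l|.
\end{equation}

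The crucial observation is that the summation is restricted to pairs $(k,l)$ satisfying $E_k-E_l=E_i-E_j$, so the factor $(E_k-E_l)$ can be replaced by the constant $(E_i-E_j)$ and pulled outside the sum. This immediately matches the right-hand side of the previous display, yielding $\hat H(\Lambda(|i\rangle_S\langle j|))=(E_i-E_j)\Lambda(|i\rangle_S\langle j|)=\Lambda(\hat H(|i\rangle_S\langle j|))$. Extending by linearity to arbitrary $X$ concludes the proof.

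No genuine obstacle arises here: the content of the corollary is just the conservation of Bohr frequencies encoded in Property 3 restated in superoperator form. The only point to be careful about is that the statement concerns $\Lambda$ acting on operators (not on Hilbert space vectors), so the commutation must be read as equality of superoperators acting on $X$; the argument above handles this explicitly by writing everything on matrix units.
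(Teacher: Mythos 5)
Your proof is correct and is essentially the paper's argument written out explicitly: the paper notes that the eigenspaces $P_\omega=\operatorname{span}\{|k\rangle\langle l| : E_k-E_l=\omega\}$ of $\hat H$ are invariant under $\Lambda$ by Property 3 and concludes commutation, while you verify the same fact by evaluating $\Lambda\circ\hat H$ and $\hat H\circ\Lambda$ directly on the matrix units (which are exactly the eigenvectors of $\hat H$ with Bohr-frequency eigenvalues) and extending by linearity. No gap; the explicit replacement of $(E_k-E_l)$ by $(E_i-E_j)$ inside the restricted sum is precisely the invariance statement the paper invokes.
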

{\it Proof.} For $\hat{H}$, we have that its eigenvalues are Bohr frequencies, and eigenspaces $P$ with respect to $\omega$ are given by $P_{\omega}:\ span\left\{|k\>\<l|, \ \omega = E_k-E_l \right\}$. From Property 3, we know that these eigenspaces are invariant for $\Lambda$ \cite{Alicki_book}.

As a remark, let us notice that one can derive laws of thermodynamics under Thermal Operations.  The commutation relation from Eq. \eqref{super} is equivalent to the first law (energy conservation)\cite{Horodecki_2013thermo}
and the unitarity (conservation of information) results in the second law(s) \cite{Horodecki_2013thermo,2013arXiv1305.5278B}. The third law can also be obtained in this framework\cite{arxiv.1412.3828}. One can show that the only allowed state in (1) that can be brought in for free is the equivalence class of Gibbs states at temperature $T$ \cite{2013arXiv1305.5278B}. Allowing any other state  would lead to the situation where there are simply no real limitations on possible transformations - every transformation is possible and there is no room for obtaining any bounds. This  can be considered as the zeroeth law which helps us to define the temperature\cite{2013arXiv1305.5278B}.
The assumption that a Gibbs state is the only possible free resource is crucial. Allowance of other states acting as a resource would lead to the situation where there are simply no real limitations on possible transformations - every transformation is possible and there is no room for obtaining any bounds \cite{2013arXiv1305.5278B}.
\section{Additional notation and assumptions for states with off-diagonal elements}
\label{App:add}
Let us try to sum up the above and adapt it to deal with the coherences.

We take a product quantum state of the Gibbs states from the heat bath and the system's state using Eq. \eqref{proj}. Since, we want to consider thermodynamical transitions, by means of Thermal Operations, we need to focus on energy preserving unitary transformations $U$ acting in blocks of fixed {\textit total} energy. We identify the blocks of fixed energies $E^i = E^j_S + E^{i,j}_{R}$ and due to the assumption \ref{as3} about heat the bath, we know that there always exist two different combinations of a sum of system and the bath energies that gives the same energy $E^i$, i.e., for the qubit with energies $E^0_S$ and $E^1_S$ one has $E^i = E^0_S + E^{i,0}_R = E^1_S + E^{i,1}_R$. The unitary transformation acts in these sub-blocks that have the same energy {\textit total} energy. Mathematically, we consider block-unitary transformations which can be written in the following form
\be
\label{gen_unit}
\mathbb{U}=\bigoplus_i U_i,
\ee
where each block acts on energy $E^i$.
Next we assume that an arbitrary unitary block $U_k$ from the above sum has the structure
\newcommand*{\tempb}{\multicolumn{1}{|c}{B_{k-1}}}
\newcommand*{\tempd}{\multicolumn{1}{|c}{D_{k-1}}}
\newcommand*{\tempbb}{\multicolumn{1}{|c}{B_{k}}}
\newcommand*{\tempdd}{\multicolumn{1}{|c}{D_{k}}}
\newcommand*{\tempo}{\multicolumn{1}{|c}{0}}
\newcommand*{\tempx}{\multicolumn{1}{|c}{\times}}
\be
\label{a8}
U_k=\left(
\begin{array}{cc}
A_{k} &\tempbb \\ \hline
C_{k} &\tempdd
\end{array}\right),
\ee
where submatrices $A_k$ and $D_k$ are square matrices of dimensions $d_{k} \times d_{k}$ and $d_{k-1} \times d_{k-1}$ respectively, while submatrices $B_k$ and $C_k$ are rectangular matrices of dimensions $d_{k} \times d_{k-1}$ and $d_{k-1} \times d_{k}$ respectively.

Then, we let this unitary acts on our state, obtaining a structure presented in Fig.~\ref{fig:diag1}. In the next sections, we show that from detail studies of the action of energy-preserving unitaries on states, the main results of our paper can be obtained.
\begin{figure}[ht]
\begin{center}
\includegraphics[scale=0.25]{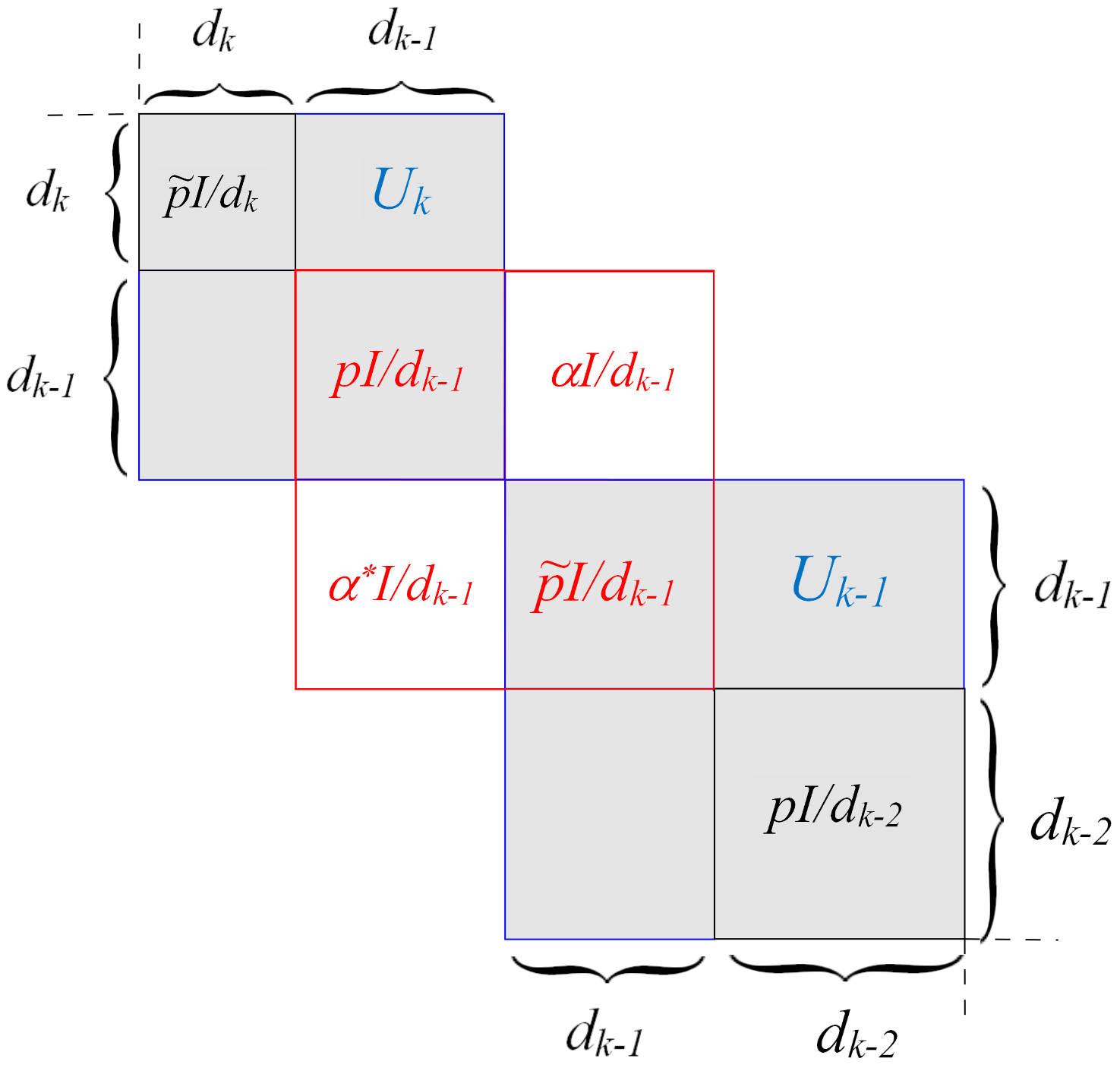}
\end{center}
\caption{\textbf{(Color online)} A qubit state $\rho_S = \usebox{\smlmat}$ with system energies $E^0_S$ and $E^1_S$, projected onto energy blocks $E^i_R$ of heat the bath with degeneracies of dimensions $d_i$. Darker squares correspond to blocks of set energies $E^i = E^j_S + E^i_R$, where a unitary $\mathbb{U}=\bigoplus_i U_i$ acts to transform $\rho_S$ into other state.}
\label{fig:diag1}
\end{figure}

We need to fulfill our assumptions, so to model the energies in our setup, (for an arbitrary dimension of a state) we can use the multinomial distribution, so the assumption \ref{as1} is obeyed. We also need to work in a regime, where dimensions of degeneracies, in a region of energy distribution, are non-decreasing.
Moreover, blocks of unitaries connect blocks of different bath energies, so to ensure we are in a proper regime (that fulfills all our assumptions), we first have to make a cut on system and the bath energies, using the Chernoff bound \cite{Mitzenmacher:2005:PCR:1076315}, so they fit blocks of unitary, and the assumption \ref{as3} is followed, and we have a non decreasing order of dimensions of degeneracies. In other words, in a general situation, we can identify 3 different steps:

\begin{itemize}
	\item First, we make a cut, so the law about degeneracies is fulfilled.
	\item We move an energy by a value $\operatorname{max}E^i_S$ into the direction of mean energy, where weight of energies are big.
	\item We want to have a unitary operation that acts on fully on blocks of set energies, so we take the projection from this regime, where all components of set energies (sum of systems and the bath energies), are from the region where to which we cut our energy distribution. Sometimes, to fulfill this, we need to include some energies which are already outside the cut area (by Chernoff bounds), but their weights are small, so we can take them to have all components of set energies.
\end{itemize}

\section{Seconds laws for coherences}
\label{App:sts}
Since we know that what happens of coherences can be separated on that what happens on a diagonal of a state, we can divide the full algorithm for state-to-state transitions into two steps
\subsection{Thermomajorization}
There exist a necessary and sufficient method, derived in \cite{Horodecki_2013thermo}, called thermomajorization, to check whether a transition between states ($\rho_S$, $H_S$) $\rightarrow$ ($\sigma_S$, $H_S$) is possible, when both states
commute with the Hamiltonian $H$ of the total system, which means they are diagonal in the energy-basis. It is based
on the majorization condition for state transformations, which is
a necessary and sufficient condition for state transformations
under permutation maps. The very brief idea is to write the
eigenvalues of the state and heat bath, in terms of eigenvalues of only the state, order them in a nonincreasing way and compare integrals of the connected functions (some monotones).

To present the details, we will use the original formulation taken from \cite{Horodecki_2013thermo}.

Let $p_{E_S,g}$  be eigenvalues of $\rho$ and $q_{E_S,g}$ be eigenvalues of $\sigma$.
The state
$P_E \rhor\ot \rhos P_E$ after normalization  is close to the state having  the following eigenvalues:
\be
e^{\beta E_S}\frac{p(E_S,g)}{g_R(E)}
\label{eq:eig}
\ee
with multiplicity $g_R(E) e^{-\beta E_S}$, where $E_S$ runs over  all
energies of the system, and $g$ runs over degeneracies.
Similarly, $P_E \rhor\ot \sigma_S P_E$ has eigenvalues $e^{\beta E_S}\frac{q(E_S,g)}{g_R(E)}$
with the same multiplicity.

The eigenvalues are very small, and they are collected in groups, where they are the same,
hence the majorization amounts to comparing integrals.
If one puts eigenvalues into decreasing order, one obtains a stair-case like function,
and majorization in this limit will be to compare the integrated functions (which are then
piece-wise linear functions).

To see how it works, we need to put the eigenvalues in nonincreasing order.
The ordering is determined by the ordering of the quantities $e^{\beta E_S}p_{E_S,g}$.
This determines the order of $p(E_S,g)$ (which in general will not be in decreasing order anymore).
We shall denote such ordered probabilities as $p_i$,
and the associated energy of the eigenstate as $E_i$.
E.g. $p_1$ is equal to the $p(E_S,g)$ such that $e^{\beta E_S}p(E_S,g)$ is the largest.
Note that for fixed $E_S$ the order is the same as the order of $p({E_S,g})$,
while for different $E_S$ it is altered by the Gibbs factor. We do the same for $\sigma$,
which results in $q_i$.

The eigenvalues are thus ordered by taking into account Gibbs weights:
\be
\underbrace{\frac{p_1 e^{\beta E_1}}{{d_E}}}_{{\text{multiplicity} \approx d_E e^{-\beta E_1}}}
\geq
\underbrace{\frac{p_2 e^{\beta E_2}}{{d_E}}}_{{\text{multiplicity} \approx d_E e^{-\beta E_2}}}
\geq \ldots
\label{eq:beta-order2}
\ee
where $d_E$ is a shorthand for $g_R(E)$.
We shall now ascribe to vector $\{p_i\}$ a function mapping interval $[0,Z]$ into itself.
On the $y$ axis, we put subsequent sums $\sum_{i=1}^l p_i$, $l=1,\ldots ,d$ where
$d$ is the number  of all probabilities,
and on the $x$ axis, we put sums $\sum_{i=1}^l e^{-\beta E_i}$, with the final point being at $x=Z$.
This gives $d+1$ pairs: $(0,0), (p_1, e^{-\beta E_1}), (p_1+p_2, e^{-\beta E_1}+e^{-\beta E_2}), \ldots,
(Z,1)$. We join the points, and it will gives us a graph of a function, $f_p(x)$.
It is easy to see, that in the limit of large $g_R(E)$, the eigenvalues of $\rho$ majorize eigenvalues of $\sigma$
if and only if $f_p(x)\geq f_q(x)$ for all $x\in[0,Z]$. The described scheme is also presented in Fig.
\ref{fig:thermomaj1}.
\begin{figure}[h]
  \includegraphics[width=0.45\textwidth]{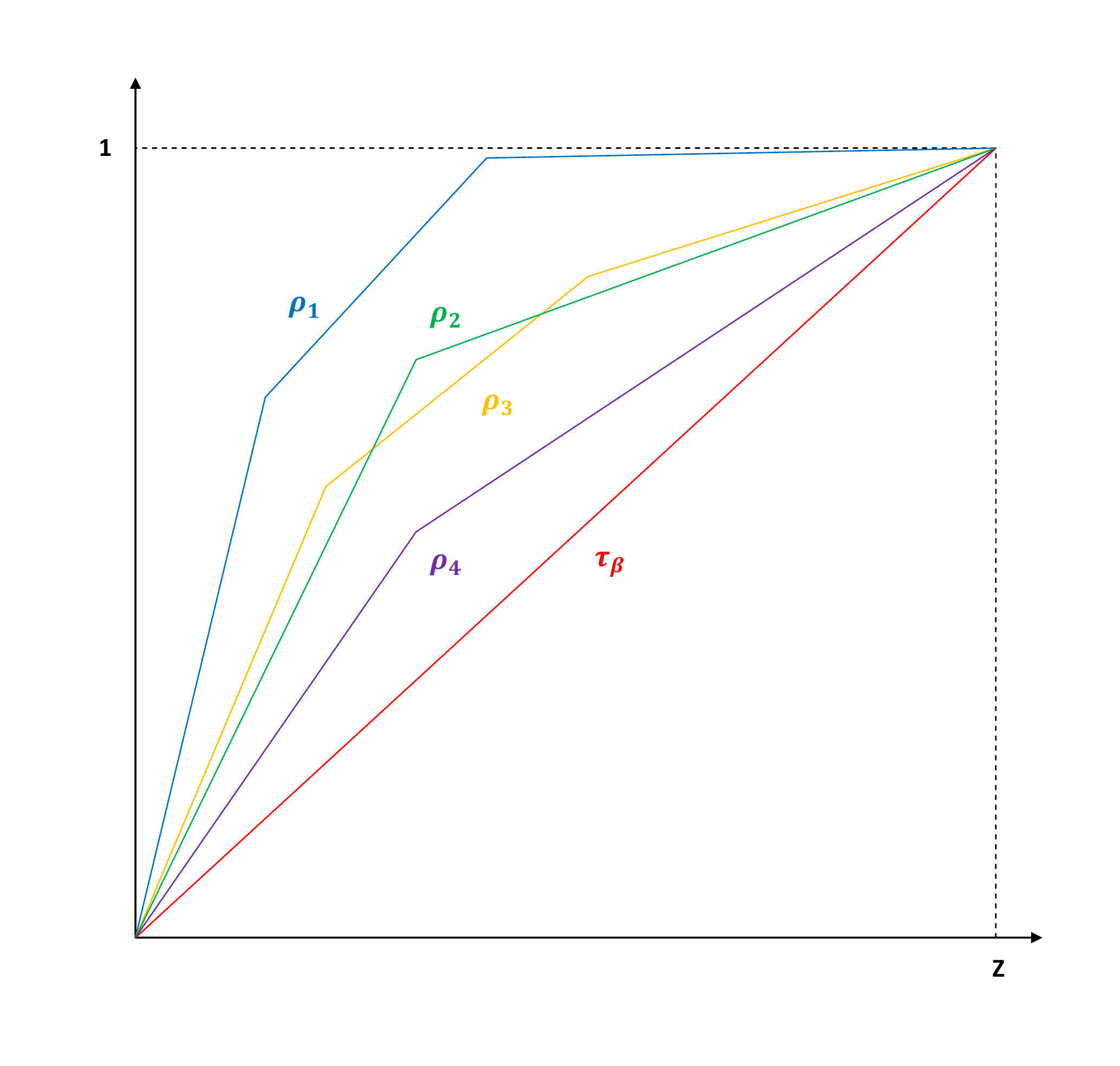}
  \caption{\textbf{(Color online)} The thermomajorization criteria is as follows: Consider probabilities $p(E,g)$ of the initial system $\rho_i$ to be in the $g$'th state of energy $E$. Now let us put $ p(E,g)\e^{\beta E}$ in decreasing order $p(E_1,g_1)\e^{\beta E_1}  \geq  p(E_2,g_2)\e^{\beta E_2}\geq p(E_3,g_3)\e^{\beta E_3} ...$, we say that the eigenvalues are $\beta$-ordered.  We can do the same for other system $\rho_j$
i.e. $\e^{\beta E^{'}_{1}} q(E^{'}_{1},g^{'}_{1}) \geq \e^{\beta E^{'}_{2}} q(E^{'}_{2},g^{'}_{2})\geq\e^{\beta E^{'}_{3}} q(E^{'}_{3},g^{'}_{3})...$.  Then the condition which determines whether
we can transform $\rho_i$ into $\rho_j$ is depicted in the above figure.  Namely, for any state, we construct a curve with points $k$ given
by $\{\sum e^{-\beta E_i}/Z,\sum_i^k p_i \}$. Then a thermodynamical transition from $\rho_i$ to $\rho_j$ is possible if and only if, the curve of $\rho_i$
lies above the curve of $\rho_j$. The Gibbs state $\tau_{\beta}$ always has the lowest possible cumulative plot. One can make a previously impossible transition possible by adding work $E$ which will scale each point by an amount $e^{-\beta E}$ horizontally. Based on this, in the figure, one can go from $\rho_1$ to any other state, but from $\rho_2$ it is possible to reach $\rho_4$ and $\tau_{\beta}$ only, and so on.}
\label{fig:thermomaj1}
\end{figure}

\subsection{Limitations for processing of coherences}
Let us recall the bounds for coherences:
\begin{proposition}
\label{p1}
When a transformation between two d-level system, initial - $\rho_S$ and final - $\sigma_S$ occurs, by means of Thermal Operations, the bounds for coherences transport come from the positivity of the Choi map that is associated with the energy preserving dynamics:
\be \label{bound} \begin{bmatrix}
p(0 \rightarrow 0) & \alpha_{11} & \ldots & \alpha_{1n}\\
\alpha_{21} & p(1 \rightarrow 1) & \ldots & \alpha_{2n}\\
\vdots & \vdots & \ddots & \vdots \\
\alpha_{n1} & \alpha_{n2} & \ldots & p(n \rightarrow n)
\end{bmatrix}, \ee
where $\alpha_{ij}$ are factors by which the off-diagonal elements are multiplied during the transition, and $p(i \rightarrow i)$ are probabilities of staying in the same energy-level; ($p(i \rightarrow j)$ is a transition
probability and $p(i)$ is a probability of occupying an energy state $i$)
\end{proposition}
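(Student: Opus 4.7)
The plan is to recognise the damping matrix in Eq.~\eqref{bound} as a principal submatrix of the Choi--Jamio\l{}kowski matrix of $\Lambda$ and then inherit its positivity from the complete positivity of $\Lambda$. Concretely, let $\ket{\Phi}=\sum_i \ket{i}\ot\ket{i}$ be the unnormalised maximally entangled vector written in the energy eigenbasis of $H_S$, and form
\[
J(\Lambda) \;=\; (\Lambda\ot\id)(\ket{\Phi}\!\bra{\Phi}) \;=\; \sum_{m,n}\Lambda(\ket{m}\!\bra{n})\ot\ket{m}\!\bra{n}.
\]
By Property~1 of TO the map $\Lambda$ is CPTP, so $J(\Lambda)\ge 0$; this is the only place where complete positivity enters.

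Next I would substitute the block-diagonal structure of $\Lambda$ already established in the paper. Because $H_S$ has non-degenerate Bohr spectrum, Eq.~\eqref{main1} gives $\Lambda(\ket{i}\!\bra{j})=\alpha_{ij}\ket{i}\!\bra{j}$ for $i\neq j$, and Eq.~\eqref{main2} gives $\Lambda(\ket{i}\!\bra{i})=\sum_j p(i\to j)\,\ket{j}\!\bra{j}$. A short calculation then produces
\[
\bra{k\ot k}J(\Lambda)\ket{l\ot l} \;=\; \bra{k}\Lambda(\ket{k}\!\bra{l})\ket{l} \;=\; \begin{cases} p(k\to k), & k=l,\\ \alpha_{kl}, & k\neq l. \end{cases}
\]
Hence the restriction of $J(\Lambda)$ to the $n$-dimensional subspace $\mathcal{V}=\mathrm{span}\{\ket{k}\ot\ket{k}\}_{k=0}^{n-1}$ is precisely the matrix displayed in Eq.~\eqref{bound}.

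To close, I would use the elementary fact that the compression $P_{\mathcal{V}}MP_{\mathcal{V}}$ of a PSD operator $M$ is PSD on $\mathcal{V}$: equivalently, for any $\ket{\psi}=\sum_k c_k\ket{k}\ot\ket{k}$, $\bra{\psi}J(\Lambda)\ket{\psi}\ge 0$ is exactly the quadratic form of the damping matrix evaluated at $(c_0,\dots,c_{n-1})$. Applied to $J(\Lambda)\ge 0$ this forces the damping matrix to be positive semidefinite, which is the proposition. The only step calling for any attention is the Kronecker-delta bookkeeping in evaluating $\bra{k\ot k}J(\Lambda)\ket{l\ot l}$, to confirm that no contributions from the double sum survive outside the pattern above; given the structure theorem for $\Lambda$ established in Properties 3 and 4, this is routine. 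I do not anticipate a serious obstacle — the argument is the standard Choi-matrix reasoning used for Davies maps in \cite{Roga2010}, here transplanted to the TO setting.
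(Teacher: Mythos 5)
Your argument is correct and is exactly the route the paper takes: it invokes the positivity of the Choi matrix of the block-diagonal map $\Lambda$ (citing Roga \emph{et al.} for this step) and reads off the damping matrix as the compression of $J(\Lambda)$ to $\mathrm{span}\{\ket{k}\ot\ket{k}\}$. In fact you supply the explicit computation of $\bra{k\ot k}J(\Lambda)\ket{l\ot l}$ that the paper only asserts by reference, so no further comparison is needed.
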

It shows that the bounds for the transport of off-diagonal elements come from the minors of the matrix from Eq. \eqref{bound}. For example, for qubits, we have 2 minors to consider, one trivial (that probabilities $p(i \rightarrow i)$ should be non-negative) and the one that really damps the coherences
\be
p(0 \rightarrow 0)p(1 \rightarrow 1) \leq \alpha^2.
\label{damp} \ee
Let us examine the qubit's case in greater details.
\subsection{Qubit example}
For qubits, there is only one bound for coherences transport and the inequality is tight, which implies that the criteria for qubits is necessary and sufficient.
To see that, consider two qubit states $\rho_S = \begin{bmatrix}
p & \alpha\\
\alpha^{*} & 1-p
\end{bmatrix}$ and $\sigma_S = \begin{bmatrix}
q & \chi\\
\chi^{*} & 1-q
\end{bmatrix}$, written in the energy eigenbasis, on a system with Hamiltonian $H_S$, where $*$ stands for complex conjugation. We know that the evolution of diagonal elements can be separated from off-diagonal ones, so for diagonal elements one uses thermomajorization obtaining four different situations, depending on the diagonal input and output and energies of the system:
\be \frac{f}{1-f} \geq \E^{\beta (E_1-E_0)}, \label{crit11} \ee
or
\be \frac{f}{1-f} \leq \E^{\beta (E_1-E_0)}, \label{crit22} \ee
where $f$ is $p$ or $q$, $E_i$ is an energy of the system (of levels 1 and 0). The four cases are illustrated in Fig. \ref{fig:beta}
\begin{figure}[h]
  \includegraphics[width=0.45\textwidth]{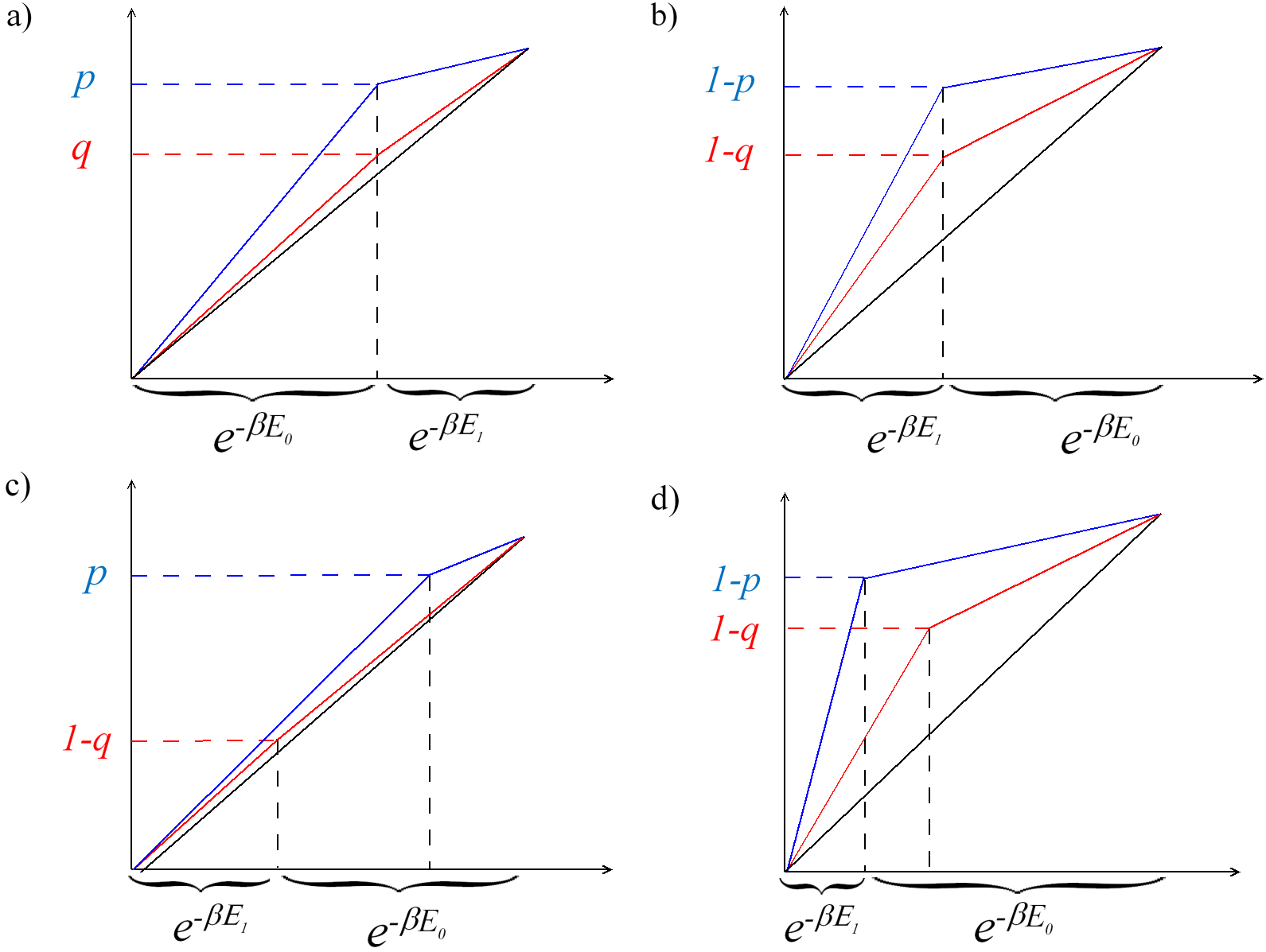}
  \caption{{\it Different $\beta$-orders}. Four cases that follows from different $\beta$-order are presented. In $a)$ and $b)$ there are the same $\beta$-order, in $c)$ and $d)$ different. Also in $a)$ both states corresponding to the curves are less excited than Gibbs state, in $b)$ more excited, in $c)$ upper state - less excited, lower - more and in $d)$ reverse. The lowest possible line corresponds to the Gibbs state.}
\label{fig:beta}
\end{figure}

Then, the state-to-state transformation $(\rho, H_S) \rightarrow (\sigma, H_S)$ by means of Thermal Operations is possible if and only if:\\
\begin{enumerate}
\item Diagonal elements - implied by thermomajorization.\\
$\begin{array}{c}
\text{a: When both $p$ and $q$ fulfills \ref{crit11} then
$p \geq q$} \\
\text{b: when both $p$ and $q$ fulfills \ref{crit22} then
$p \leq q$} \\
\text{c: when $p$ fulfills \ref{crit11} and $q$ \ref{crit22} then
$\frac{q}{1-p} \geq \frac{r}{1-r}$}\\
\text{d: when $p$ fulfills \ref{crit22} and $q$ \ref{crit11} then
$\frac{p}{1-q} \leq \frac{r}{1-r}$}.
\end{array}$
 \item Off-diagonal elements - fundamental bound for all cases from Eq. \eqref{damp}. \\
  \begin{equation}
   |\chi| \leq  |\alpha| \kappa,
   \label{case5}
   \end{equation}
\end{enumerate}
where $\kappa =  \frac{\sqrt{(q-\widetilde{p}\E^{\beta \Delta E})(p-\widetilde{q}\E^{\beta \Delta E})}}{\left| p-\widetilde{p}\E^{\beta \Delta E} \right|}$, $\widetilde{q} = 1 - q$, $\widetilde{p} = 1- p$, $\E^{\pm\beta \Delta E} = \E^{\pm\beta (E_j-E_i)}$ with $E_i$ being energy of the system and $\beta$ inverse temperature $\beta = \frac{1}{kT}$. Notice that $\kappa \geq 0$, where it is equal to zero iff we consider the excited state or the ground state, i.e., where $p=0$ or $p=1$, but these states have no coherences at all, so we can conclude that qubit optimal processes are not able to destroy coherences completely (even if repreated many times).    
{\it Remark.} Since phases commute with the total Hamiltonian of our setup, we can restrict our attention only to moduli of the coherences.

\subsection{Proof of the optimal coherence transport for qubits}
It this subsection, we show that the qubits state-to-state transitions can be always implement by such an energy preserving U that the coherences are damped in a most friendly way for them, ie, that it is given by the equality from Eq. \eqref{damp}.
\begin{proposition}
For qubits, the bound from Proposition \ref{bound}, which gives Eq. \eqref{case5}, is tight.
\end{proposition}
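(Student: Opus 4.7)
The plan is to construct an explicit CPTP map that realises the transition and saturates the damping-matrix bound, and then appeal to the qubit identification $\mathrm{TO}=\mathrm{ETO}$ recalled above. First I would determine the four transition rates $p(i\to j)$ uniquely from the data: Gibbs preservation is equivalent to detailed balance $p(0\to 1)=\E^{-\beta\Delta E}\,p(1\to 0)$, and combined with the row-stochastic conditions and the diagonal marginal $q=p\,p(0\to 0)+(1-p)\,p(1\to 0)$ this yields
\begin{equation*}
p(0\to 0)=\frac{q-\widetilde{p}\,\E^{\beta\Delta E}}{p-\widetilde{p}\,\E^{\beta\Delta E}},\qquad
p(1\to 1)=\frac{p-\widetilde{q}\,\E^{\beta\Delta E}}{p-\widetilde{p}\,\E^{\beta\Delta E}}.
\end{equation*}
Multiplying, $p(0\to 0)\,p(1\to 1)=\kappa^{2}$, so saturation of \eqref{case5} amounts to saturating Cauchy--Schwarz in the $2\times 2$ minor of the damping matrix \eqref{bound}.

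Next I would exhibit the candidate channel via Kraus operators
\begin{align*}
K_{0}&=\sqrt{p(0\to 0)}\,|0\>\<0|+\sqrt{p(1\to 1)}\,|1\>\<1|,\\
K_{1}&=\sqrt{p(1\to 0)}\,|0\>\<1|,\qquad K_{2}=\sqrt{p(0\to 1)}\,|1\>\<0|,
\end{align*}
and check by direct computation that (i) $\sum_{i}K_{i}^{\dagger}K_{i}=I$, so the induced $\Lambda$ is CPTP; (ii) $\Lambda(\rho_{S})=\sigma_{S}$ with off-diagonal element $\chi=\sqrt{p(0\to 0)\,p(1\to 1)}\,\alpha=\kappa\,\alpha$, saturating the bound; (iii) the Gibbs state is preserved, which is automatic from detailed balance; and (iv) each $K_{i}$ is an eigenoperator of the super-operator $\hat{H}_{S}$ with Bohr frequency $0$, $-\Delta E$, $+\Delta E$ respectively, which forces $[\Lambda,\hat{H}_{S}]=0$. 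Hence $\Lambda\in\mathrm{ETO}$.

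The remaining, and main, obstacle is to realise this ETO map by a genuine energy-conserving unitary on system plus bath. A single-qubit-bath partial swap of angle $\phi$ does not suffice: it yields $|\chi/\alpha|=\cos\phi$ while a direct calculation of $\sqrt{p(0\to 0)\,p(1\to 1)}$ in that case equals $\sqrt{1-\sin^{2}\phi+\sin^{4}\phi\,p^{*}(1-p^{*})}$, with $p^{*}$ the Gibbs ground-state weight, which is strictly larger for $0<\sin\phi<1$ and finite temperature. I would close the gap by invoking the qubit equivalence $\mathrm{TO}=\mathrm{ETO}$ recalled in the main text and established via the Birkhoff-type decomposition of \cite{2011arXiv1111.3882B}: it guarantees that the extreme ETO points, precisely those saturating the damping-matrix minors, are reproduced exactly by Thermal Operations acting on a sufficiently large bath.
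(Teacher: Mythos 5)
Your reduction of the problem is sound and matches the paper's logic up to a point: the transition probabilities $p(i\to i)$ are indeed uniquely fixed by Gibbs preservation plus the diagonal marginals (your expressions agree with Eq.~\eqref{c4}), so tightness of Eq.~\eqref{case5} is equivalent to saturating the Cauchy--Schwarz inequality in the $2\times 2$ minor, and your Kraus construction does exhibit a map in \gto\ that achieves this. The problem is your final step. You close the argument by ``invoking the qubit equivalence $\mathrm{TO}=\mathrm{ETO}$,'' but in this paper that equivalence is a \emph{consequence} of the present proposition, not an ingredient of it: the statement that TO do no worse than \gto\ for qubits is established precisely by showing that an energy-conserving system--bath unitary can saturate the bound. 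The Birkhoff-type primitive of \cite{2011arXiv1111.3882B} that you cite is used in the paper only to argue that \emph{non-extreme} ETO maps can be approximated by TO; it does not by itself deliver exact realization of the extreme, bound-saturating maps, which is the whole content of the claim. So as written your argument is circular, and the hard part of the proof --- which you correctly identify as ``the remaining, and main, obstacle'' --- is left undone.

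What the paper actually does at this step is an explicit construction. Writing the energy-conserving unitary in block form $\mathbb{U}=\bigoplus_k U_k$ over total-energy sectors, with sub-blocks $A_k,B_k,C_k,D_k$ as in Eq.~\eqref{a8}, the damping factor is governed by traces of products such as $\tr[D_k A_{k-1}^\dagger]$ subject to the constraints $\tr[D_kD_k^\dagger]=d_{k-1}p(0\to 0)$ and $\tr[A_{k-1}A_{k-1}^\dagger]=d_{k-1}p(1\to 1)$; the maximum $d_{k-1}\sqrt{p(0\to 0)p(1\to 1)}$ is attained exactly when the two diagonals are parallel as vectors. The paper then builds a global block-unitary in which \emph{every} such pair of sub-blocks is parallel (Fig.~\ref{fig:diag2}), seeding $A_n$ with zeros and ones and propagating the ratio $\gamma=\sqrt{p(1\to 1)/p(0\to 0)}$ through successive blocks; the construction is consistent because $p(0\to 0)>p(1\to 1)$ (Lemma~\ref{lemma1}), which follows from Gibbs preservation and the exponential growth of the bath degeneracies. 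If you want to complete your proof along your own lines, you must supply an argument of this kind --- an explicit energy-conserving dilation on a suitably degenerate bath --- rather than a partial swap with a single bath qubit (which, as you correctly computed, falls short) or an appeal to the TO/ETO equivalence.
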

Let us take two blocks $D_{k}$ and $A_{k-1}$ of a dimension $d_{k-1}$,  like in Figure~\ref{fig:diag11}. We assume here that matrices $D_{k}$ and $A_{k-1}$ have the diagonal form
\begin{figure}[ht]
\begin{center}
\includegraphics[scale=0.25]{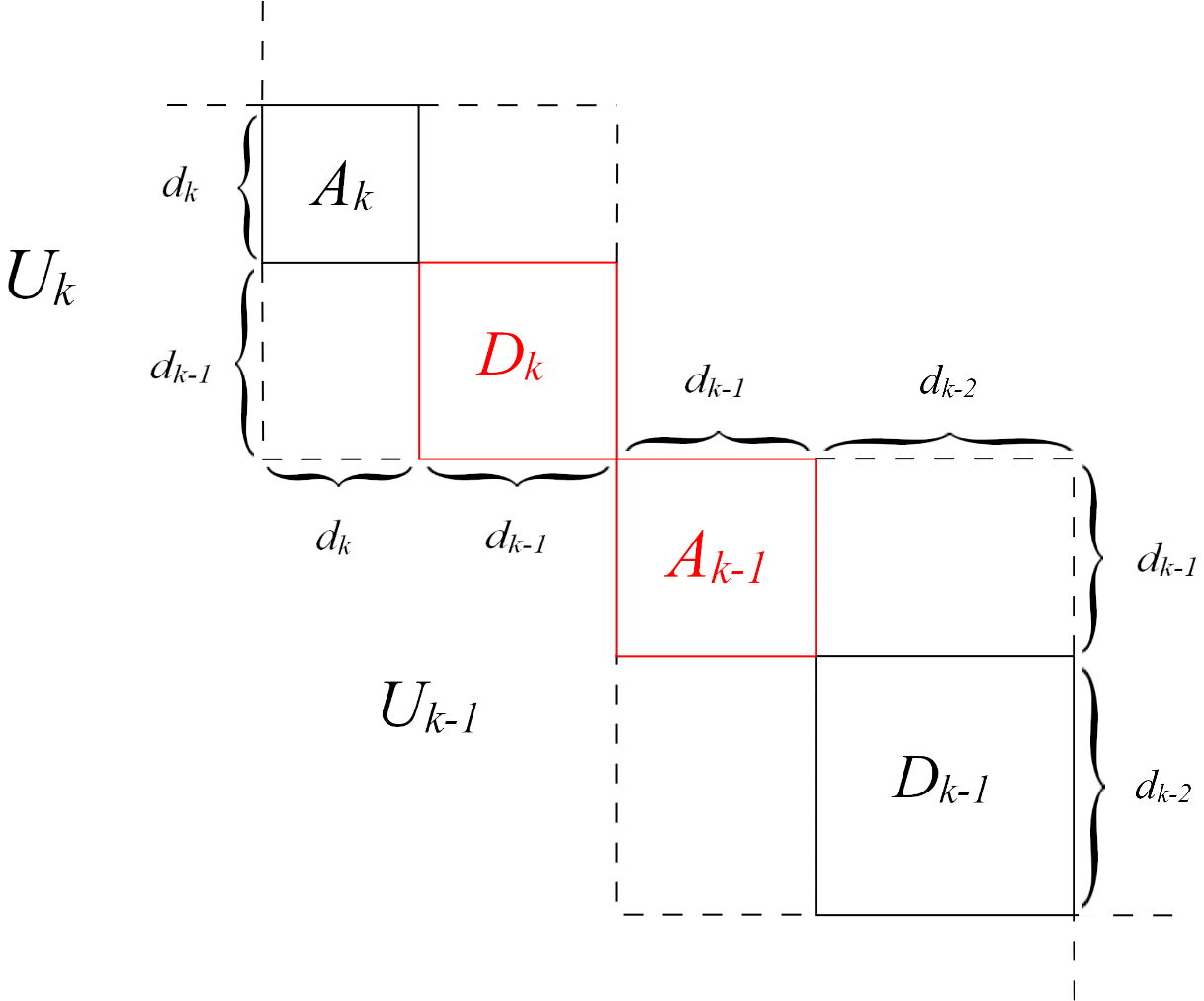}
\end{center}
\caption{\textbf{(Color online)} We consider two unitary blocks $U_{k}$ and $U_{k-1}$ and their subblocks $D_{k}$ and $A_{k-1}$ (red color) of dimensions $d_{k-1} \times d_{k-1}$. Our goal is to find maximal value of $\tr \left[ D_{k}A_{k-1}^{\dagger}\right]$, when both matrices are diagonal and satisfy constraints $\tr \left[ D_{k}D_{k}^{\dagger}\right]=d_{k-1}p(0 \rightarrow 0)$, $\tr \left[A_{k-1}A_{k-1}^{\dagger}\right]=d_{k-1}p(1 \rightarrow 1)$. By $p(0 \rightarrow 0)$ and $p(1 \rightarrow 1)$ we denote probability transitions between energy levels $0 \rightarrow 0$ and $1 \rightarrow 1$ respectively.}
\label{fig:diag11}
\end{figure}
and satisfy constraints which follow from their trace norms
\be
\label{a1}
\tr \left[ D_{k}D_{k}^{\dagger}\right]=\sum_{i=1}^{d_{k-1}}z_i^2=d_{k-1}p(0 \rightarrow 0),\qquad \tr \left[A_{k-1}A_{k-1}^{\dagger}\right]=\sum_{i=1}^{d_{k-1}}x_i^2=d_{k-1} p(1 \rightarrow 1).
\ee
We would like to know the maximal value of
\be
\label{a2}
\tr \left[D_{k}A_{k-1}^{\dagger}\right]=\sum_{i=1}^{d_{k-1}}z_ix_i
\ee
with constraints~\eqref{a1}. We can treat diagonals of matrices of $D_{k}$ and $A_{k-1}$ like a vectors of length $d_{k-1}$, so
\be
\label{a3}
|\psi\>=|z_1,\ldots,z_{d_{k-1}}\>,\quad |\varphi\>=|x_1,\ldots,x_{d_{k-1}}\>
\ee
with squared norms
\be
\label{a4}
|| \psi ||^2=d_{k-1}p(0 \rightarrow 0),\qquad || \varphi ||^2=d_{k-1}p(1 \rightarrow 1)
\ee
and equation~\eqref{a2} reads $\tr \left[ D_{k} A_{k-1}^{\dagger}\right]=\<\psi |\varphi\>$. Thanks to this,  maximization problem of $\tr \left[ D_{k} A_{k-1}^{\dagger}\right]$ we can reformulate to $\max \<\psi | \varphi\>$ with constraints~\eqref{a4}.\\ In the first step we change coordinates in the following way
\be
\label{a5}
\begin{split}
\sum_{i=1}^{d_{k-1}}z_i^2&=d_{k-1}p(0 \rightarrow 0) \rightarrow  \sum_{i=1}^{d_{k-1}}\bar{z}_i^2=1,\\
\sum_{i=1}^{d_{k-1}}x_i^2&=d_{k-1}p(1 \rightarrow 1) \rightarrow  \sum_{i=1}^{d_{k-1}}\bar{x}_i^2=1,
\end{split}
\ee
so  we have $\lambda_1 z_i=\bar{z}_i$ and $\lambda_2 x_i=\bar{x}_i$, where $\lambda_{1(2)}$ are some numbers. Using norm invariance we calculate that
\be
\label{a6}
\lambda_1=\frac{1}{\sqrt{d_{k-1}p(0 \rightarrow 0)}},\qquad \lambda_2=\frac{1}{\sqrt{d_{k-1}p(1 \rightarrow 1)}}.
\ee
Finally we can write
\be
\label{a7}
\begin{split}
&\max \<\psi|\varphi\>=\max \left(\sum_{i=1}^{d_{k-1}}z_ix_i\right)=\frac{1}{\lambda_1\lambda_2}\max\left(\sum_{i=1}^{d_{k-1}}\bar{z}_i\bar{x}_i\right)=\\
&=d_{k-1}\sqrt{p(0 \rightarrow 0)p(1 \rightarrow 1)}=\sqrt{\tr \left[ D_{k}D_{k}^{\dagger} \right]}\sqrt{\tr \left[ A_{k-1}A_{k-1}^{\dagger} \right]}.
\end{split}
\ee
From Eq.~\eqref{a7} we see that the maximum value of $\tr \left[D_{k} A_{k-1}^{\dagger}\right]$ is equal to the product $\sqrt{\tr \left[ D_{k}D_{k}^{\dagger}\right]}\sqrt{\tr \left[ A_{k-1}A_{k-1}^{\dagger}\right]}=d_{k-1}\sqrt{p(0 \rightarrow 0)p(1 \rightarrow 1)}$.  One can see that in previous calculations we did not assume nothing about exact values of $x_i$ and $z_i$, where $1\leq i \leq d_{k-1}$. We see that we saturate the inequality~\eqref{a7} when matrices $D_{k}$ and $A_{k-1}$ (respectively vectors $|\psi \>$, $|\varphi \>$) are parallel.

It seems that for an arbitrary states transformation, there are many unitaries that realize some transition. We need the one that, which has the largest probabilities $p(i \rightarrow i)$, as this would improve the transfer of coherences and possibly saturate bounds from Eq.~\eqref{a7}. Then we can ask: Are we able to construct a unitary transformation $\mathbb{U}$ where every blocks $D_{k}$ and $A_{k-1}$ saturates the equality from Eq.~\eqref{a7}? The answer for this question in the case of a two-level system is positive and the construction is like in Figure~\ref{fig:diag2}. One can note that that there is also another possibility of constructing a block-unitary transformation $\mathbb{U}$ which saturates the inequality~\eqref{a7} and does not have the "brute" form, i.e. we do not put only zeros and ones on the diagonal in the block $A_n$. It can be checked that filling the first block (left, upper corner) with $\operatorname{diag}A_n=\{\sqrt{p(0 \rightarrow 0)},\ldots,\sqrt{p(1 \rightarrow 1)}\}$, subsequently filling the next block with $\sqrt{p(1 \rightarrow 1)}$ and 1, and then performing the optimalization over the next possible block (first of the higher energy) also leads to saturation of the Schwartz inequality.
\begin{figure}[ht]
\begin{center}
\includegraphics[scale=0.25]{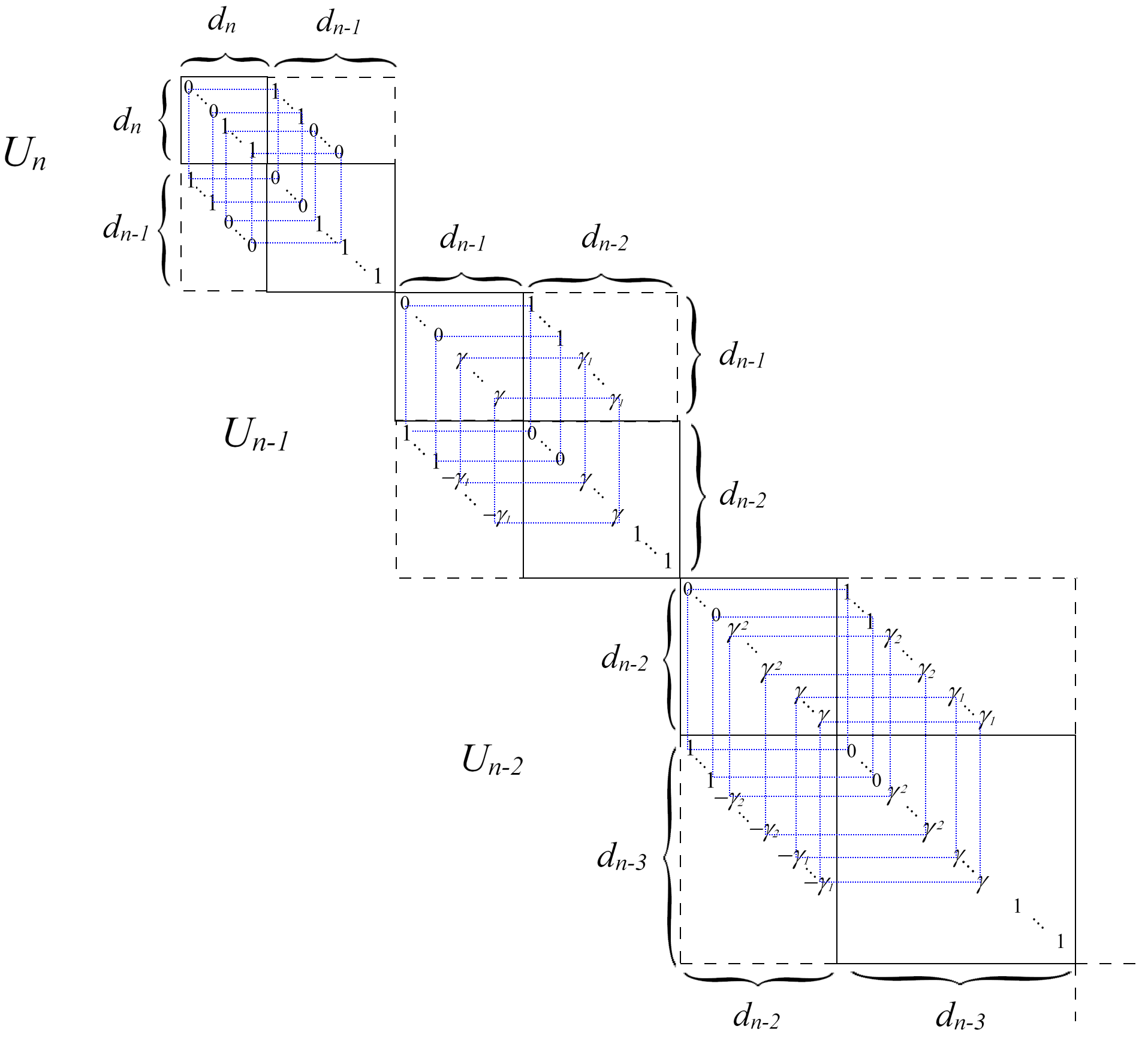}
\end{center}
\caption{\textbf{(Color online)} In this figure we present an optimal block-unitary transformation. We start our construction from unitary block $U_n$. Firstly on diagonal of submatrix $A_n$ we put some number of zeros and ones to satisfy constraint $\tr \left[ A_nA_n^{\dagger}\right]=d_np(1 \rightarrow 1)$. The numbers of zeros and ones tell us how much energy levels we want to move and leave untouched. On the off-diagonal blocks of $U_n$ we put ones to complete full block to unitary. Secondly, diagonal of submatrix $D_n$ we choose in such a way to obtain it parallel to diagonal of $A_n$, i.e. we rewrite diagonal of $A_n$ and put ones on the tail (we do not have more energy levels to move). The number of ones on diagonal of $D_n$ is equal to $d_np(1 \rightarrow 1)+(d_{n-1}-d_n)=d_{n-1}p(0 \rightarrow 0)$. Now let us take submatrix $A_{n-1}$ in $U_{n-1}$. Diagonal of $A_{n-1}$ has to be parallel to diagonal of $D_n$ and of course satisfy condition $l_n \gamma^2=d_{n-1}p(1 \rightarrow 1)$, where $l_n=d_{n-1}p(0 \rightarrow 0)$ is the number of ones in the submatrix $D_n$, so we have to choose $\gamma=\sqrt{\frac{p(1 \rightarrow 1)}{p(0 \rightarrow 0)}}$. We continue this procedure for the rest blocks in our unitary transformation. One can see that this construction is valid only in the regime $p(1 \rightarrow 1)< p(0 \rightarrow 0)$ (see, Lemma \ref{lemma1}). In this figure $\gamma_1=\sqrt{1-\gamma}$ and $\gamma_2=\sqrt{1-\gamma^2}$.}
\label{fig:diag2}
\end{figure}

\begin{figure}[ht]
\begin{center}
\includegraphics[scale=0.25]{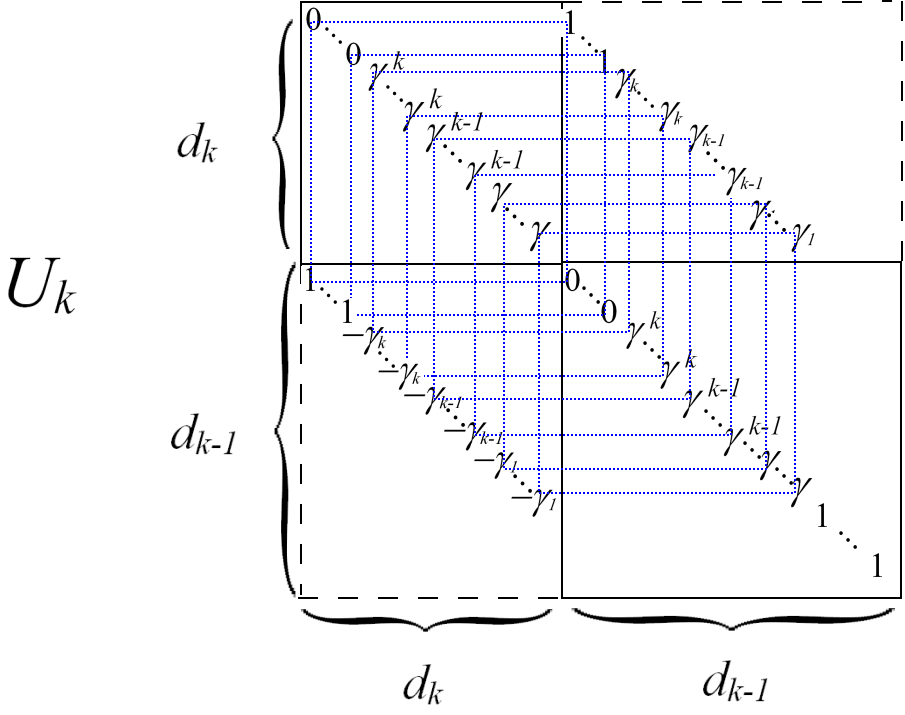}
\end{center}
\caption{\textbf{(Color online)} In this figure we present an arbitrary block $U_{k}$ of a general block-unitary transformation from Fig.~\ref{fig:diag2}. Let us consider a submatrix $D_k$. The number of zeros is equal to the numbers of zeros in our starting point - submatrix $A_n$. Then number of $\gamma^k$ is equal to $l_n$, number of $\gamma^{k-1}$ is equal to $d_{n-2}-d_{n-1}$ and finally number of $1'$s in the tail is equal to $d_{k-1}-d_{k}$. We define $\gamma_i$ as $\sqrt{1-\gamma^i}$. }
\label{fig:diag3}
\end{figure}

We need to check what is the relation between $p(0 \rightarrow 0)$ $p(1 \rightarrow 1)$. It is presented in the following lemma.
\begin{lemma}
\label{lemma1}
For the qubit case one has:
\be p(0 \rightarrow 0) > p(1 \rightarrow 1), \ee
and
\be p(1 \rightarrow 0) > p(0 \rightarrow 1). \ee
\end{lemma}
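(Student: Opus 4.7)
The transition probabilities $p(i\to j)$ assemble into a $2\times 2$ column-stochastic matrix acting on the diagonal of the qubit density matrix. My plan is to exploit only two ingredients that were already established as properties of Thermal Operations: (i) trace preservation, which forces $p(0\to 0)+p(0\to 1)=1$ and $p(1\to 0)+p(1\to 1)=1$, and (ii) preservation of the Gibbs state $\tau=(1/Z)\bigl(|0\rangle\langle 0|+e^{-\beta\Delta E}|1\rangle\langle 1|\bigr)$ with $\Delta E=E_1-E_0>0$. Both are already in the list of properties of TO proven in Supplementary Note~\ref{App:to}, so the lemma should follow without any further use of the unitary dilation.

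First, I would write out what Gibbs preservation says on the diagonal:
\begin{align}
\tau_0 &= p(0\to 0)\,\tau_0 + p(1\to 0)\,\tau_1,\\
\tau_1 &= p(0\to 1)\,\tau_0 + p(1\to 1)\,\tau_1,
\end{align}
with $\tau_0=1/Z$ and $\tau_1=e^{-\beta\Delta E}/Z$. Subtracting $p(0\to 0)\tau_0$ from both sides of the first equation and using $1-p(0\to 0)=p(0\to 1)$ gives the detailed-balance-type identity
\begin{equation}
p(0\to 1)\,\tau_0 \;=\; p(1\to 0)\,\tau_1,
\qquad\text{i.e.}\qquad
p(0\to 1)\;=\;e^{-\beta\Delta E}\,p(1\to 0).
\end{equation}
Since $\Delta E>0$ and $\beta>0$ we have $e^{-\beta\Delta E}<1$, which immediately yields the second inequality $p(1\to 0)>p(0\to 1)$ of the lemma (provided $p(1\to 0)>0$; see below).

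For the first inequality I would then just subtract the two stochasticity conditions:
\begin{equation}
p(0\to 0)-p(1\to 1)
\;=\;\bigl(1-p(0\to 1)\bigr)-\bigl(1-p(1\to 0)\bigr)
\;=\;p(1\to 0)-p(0\to 1)
\;=\;\bigl(1-e^{-\beta\Delta E}\bigr)p(1\to 0),
\end{equation}
which is again strictly positive as soon as $p(1\to 0)>0$. The only mild subtlety I foresee is that the strict inequalities fail in the trivial case $p(1\to 0)=0$: by the identity above this forces $p(0\to 1)=0$ as well, so the channel acts as the identity on populations, giving $p(0\to 0)=p(1\to 1)=1$. This case is exactly the ``do nothing'' operation and is irrelevant for the construction in Fig.~\ref{fig:diag2}, which concerns a non-trivial state-to-state transition; I would therefore close by remarking that, outside this degenerate case, both inequalities are strict, and that no obstacle beyond this bookkeeping is involved—the whole argument is a one-line consequence of Gibbs preservation.
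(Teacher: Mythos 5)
Your proof is correct, and for the first inequality it takes a genuinely different (and more elementary) route than the paper. For $p(1\to 0)>p(0\to 1)$ both arguments ultimately rest on the detailed-balance identity $p(0\to 1)=e^{-\beta\Delta E}\,p(1\to 0)$, but the paper simply asserts this ratio as a consequence of Gibbs preservation, whereas you derive it from Gibbs preservation plus column-stochasticity; this is worth having explicitly, because the paper itself remarks elsewhere that Thermal Operations need not satisfy detailed balance in general --- it is only for a two-level system that Gibbs-state preservation forces it, which is exactly what your two displayed fixed-point equations show. For $p(0\to 0)>p(1\to 1)$ the paper descends to the bath level and uses the degeneracy relation $d_k\,p(1\to 1)+(d_{k-1}-d_k)=d_{k-1}\,p(0\to 0)$ together with $d_{k-1}/d_k=e^{\beta\Delta E}>1$ (its intermediate algebra even contains a stray symbol), while you stay entirely at the level of the induced $2\times 2$ stochastic map and get $p(0\to 0)-p(1\to 1)=p(1\to 0)-p(0\to 1)=(1-e^{-\beta\Delta E})\,p(1\to 0)$ by subtracting the two row sums --- cleaner, and it makes transparent that the two inequalities of the lemma are in fact the same inequality. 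Your handling of the degenerate case $p(1\to 0)=0$ (the identity channel on populations, where both inequalities become equalities) is also more careful than the paper, whose detailed-balance ratio is not even defined there; since the lemma is invoked for the construction of a nontrivial optimal unitary, excluding this case as you do is the right resolution.
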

\begin{proof}
From the preservation of the Gibbs state one has
\be
\frac{p(i \rightarrow j)}{p(j \rightarrow i)} = e^{-\beta (E_j-E_i)} < 1,
\ee
which immediately tells us that $p(1 \rightarrow 0) > p(0 \rightarrow 1)$. Now, using a relation between $p(0 \rightarrow 0)$ and $p(1 \rightarrow 1)$: $d_{k}p(1 \rightarrow 1) + (d_{k-1}-d_k) = d_{k-1}p(0 \rightarrow 0)$, where $d_i$ are dimensions of degeneracies of the bath, we have that
\be
\begin{split}
&d_{k-1}p(0 \rightarrow 0)-d_{k-1} = d_{k}p(1 \rightarrow 1)- d_{k}p, \\
&d_{k-1}(p(0 \rightarrow 0)-1) = d_k(p(1 \rightarrow 1)-1),\\
&\frac{d_{k-1}}{d_k}=\frac{p(1 \rightarrow 1 )-1}{p(0 \rightarrow 0)-1}.
\end{split}
\ee
We know that Thermal Operations preserve the Gibbs state which equivalently means that $\frac{d_{k-1}}{d_k} > 1$ which gives $p(0 \rightarrow 0) > p(1 \rightarrow 1)$ q.e.d.
\end{proof}

The above lemma is used in the construction of the optimal unitary for coherences transport that is presented in Fig. \ref{fig:diag2}.

\subsection{Full characterization of qubits state-to-state transitions: classical channel analogy}
\label{sec:app}
In this section we give formulas which allow us to write probabilities $p(i \rightarrow j)$ of the level transitions, in terms of initial and final state elements of our case-study example, written in the energy-basis, $p,q, 1-p,1-q$, using an analogy of the classical channel (we have a channel that preserves the Gibbs state on a diagonal of input states, and some tranformation/damping of coherences, which comes from the energy conservation relation). From the positivity of this channel, we get that the Gibbs state needs to be preserved and the coherences need to be damped. To do this, let us consider a one block of the qubit in the initial state (see, Fig.~\ref{fig:diag1}.
Due to the transformation $\mathbb{U}$ we obtain as a result, an output state in the form
\be
\label{c2}
\sigma=\begin{pmatrix} \mathcal{C}\frac{\widetilde{p}}{d_{k}}+\mathcal{D}\frac{p}{d_{k-1}} & \times \\ \times & \mathcal{A}\frac{\widetilde{p}}{d_{k-1}}+\mathcal{B}\frac{p}{d_{k-2}} \end{pmatrix},
\ee
where $\mathcal{A}=A_{k-1}A_{k-1}^{\dagger}$, $\mathcal{B}=B_{k-1}B_{k-1}^{\dagger}$, $\mathcal{C}=C_{k}C_{k}^{\dagger}$, $\mathcal{D}=D_{k}D_{k}^{\dagger}$, and $\widetilde{p}+p=1, \ \widetilde{q}+q=1 $. Thanks to this and equation~\ref{c2} we can write
\be
\label{c3}
\begin{split}
&\frac{\widetilde{p}}{d_{k}} \tr \mathcal{C}+\frac{p}{d_{k-1}}\tr \mathcal{D}=q,\\
&\frac{\widetilde{p}}{d_{k-1}} \tr \mathcal{A}+\frac{p}{d_{k-2}}\tr \mathcal{B}=\widetilde{q}.
\end{split}
\ee
 Now we can rewrite this in terms of probabilities $p_{ij}$ using the preservation of the Gibbs state relation obtaining
\be
\label{channel}
\begin{split}
&q = p(0 \rightarrow 0) + \widetilde{p} \E^{\beta \Delta E} p(0 \rightarrow 1), \\
&\widetilde{q} = p\E^{-\beta \Delta E}p(1 \rightarrow 0) + \widetilde{p} p(1 \rightarrow 1).
\end{split}
\ee
Because of the constraints, coming from the unitarity of our block matrix, $p(1 \rightarrow 1)+p(1 \rightarrow 0)=1$ and $p(0 \rightarrow 0)+p(0 \rightarrow 1)=1$ we can easily express probabilities $p(i \rightarrow j)$ in terms of $\widetilde{p},p,\widetilde{q},q$
\be
\label{c4}
\begin{split}
p(0 \rightarrow 0)&=\frac{1}{d_{k-1}}\tr \mathcal{D}=\frac{q-\widetilde{p}\E^{\beta \Delta E}}{p-\widetilde{p}\E^{\beta \Delta E}},\\
p(1 \rightarrow 1)&=\frac{1}{d_{k-1}}\tr \mathcal{A}=\frac{\widetilde{q}-p\E^{-\beta \Delta E}}{\widetilde{p}-p\E^{-\beta \Delta E}}.
\end{split}
\ee

Thanks to the above formulas we can rewrite the term $\sqrt{p(0 \rightarrow 0)p(1 \rightarrow 1)}$ from the condition~\eqref{bound} using probabilities $p,q$:
\be
\sqrt{p(0 \rightarrow 0)p(1 \rightarrow 1)} = \frac{\sqrt{(q-\widetilde{p}\E^{\beta \Delta E})(p-\widetilde{q}\E^{\beta \Delta E})}}{\left| p-\widetilde{p}\E^{\beta \Delta E} \right|}.
\ee

\subsection{Bounds for coherences vs relaxation times: simplified scenario}

Here we examine the connection between our bounds for coherences in the qubit case (qubit $\rho_S$ with energy levels $E_0$ and $E_1$ and Hamiltonian $H_S$) and the relaxation time $T_2$ \cite{Gorini1978149, PhysRevA.66.062113, Alicki_book}. We are going to assume that diagonals elements decay exponentially, and then look at the decay of the off-diagonal as a function of $T_2$ (transverse relaxation) and t. We thus assume the following relations as they give exponential decay to the thermal state from any initial state
\be
p(1 \rightarrow 0) = p(1)(1 - \E^{-\frac{t}{T_2}}),\\
p(0 \rightarrow 1) = p(0)(1 - \E^{-\frac{t}{T_2}}),
\label{relaxtime}
\ee
where $p(0)$ and $p(1) = 1 - p(0)$ are the elements of the corresponding Gibbs state for $H_S$, $\rho_{S} = \begin{bmatrix}
p(0) & 0\\
0 & p(1)
\end{bmatrix} = \frac{1}{E^{-\beta E_0}+E^{-\beta E_1}}\begin{bmatrix}
E^{-\beta E_0} & 0\\
0 & E^{-\beta E_1}
\end{bmatrix}$.
Let us recall now that the damping factor for coherences, presented in Eq. \eqref{case5}, is equal to $\kappa = \sqrt{p(0 \rightarrow 0)p(1 \rightarrow 1)} = \frac{\sqrt{(q-\widetilde{p}\E^{\beta \Delta E})(p-\widetilde{q}\E^{\beta \Delta E})}}{\left| p-\widetilde{p}\E^{\beta \Delta E} \right|}$, $\widetilde{q} = 1 - q$, $\widetilde{p} = 1- p$, $\E^{\pm\beta \Delta E} = \E^{\pm\beta (E_j-E_i)}$ with $E_i$ being energy of the system and $\beta$ inverse temperature $\beta = \frac{1}{kT}$. We say that we need to express $\sqrt{p(0 \rightarrow 0)p(1 \rightarrow 1)}$ in terms of the relaxation time factor $\E^{-\frac{t}{T_2}}$. To do this, we insert relations from Eq. \eqref{relaxtime} into the constraints for transition probabilities coming from the unitary constrain, namely: 
\be
p(1 \rightarrow 1) + p(1 \rightarrow 0) = 1, \\
p(0 \rightarrow 0) + p(0 \rightarrow 1) = 1.
\label{uc}
\ee
We then have that the damping factor for coherences can be expressed as follows
\be
\label{decaytime}
\begin{split}
&\kappa = \sqrt{p(0 \rightarrow 0)p(1 \rightarrow 1)} = (p(0) - p(0)^2 + (1-2p(0)+2p(0)^2)E^{-\frac{t}{T_2}} + (p(0) - p(0)^2)E^{-2\frac{t}{T_2}})^{\frac{1}{2}} \\
&\leq \sqrt{p(0) - p(0)^2} + \sqrt{(1-2p(0)+2p(0)^2)}E^{-\frac{t}{2T_2}} + \sqrt{(p(0) - p(0)^2)}E^{-\frac{t}{T_2}}.
\end{split}
\ee
On the other hand, if instead we had decayed to the Gibbs state under the action of a Linblad generator (see eg \cite{Roga2010}), then we would have expected exponential decay of the off-diagonal terms
going as $\kappa=e^{-t/T_1}$, and $2T_1\geq T_2$. Here, we see that the decay of the off-diagonal terms does not decay exponentially fast, and does not even decay to zero as it would under a Lindblad generator. This is because we consider the class of optimal processes - the ones which preserve coherences as much as possible. 

Alternatively, we can derive this, using tools from open-system dynamics \cite{Alicki_book, Roga2010}. Since, as we have shown, Thermal Operations obeys the same block-diagonal structure (for non-degenerate Hamiltonians) as the Linblad generator under Davies maps,  we can write the corresponding one-qubit map in the computational basis for transition probabilities $p(i \rightarrow j)$ as:
\be
\begin{bmatrix}
p(0 \rightarrow 0) & 0 & 0 & p(0 \rightarrow 1)\\
0 & \sqrt{p(0 \rightarrow 0)p(1 \rightarrow 1)} & 0 & 0\\
0 & 0 & \sqrt{p(0 \rightarrow 0)p(1 \rightarrow 1)} & 0\\
p(1 \rightarrow 0) & 0 & 0 & p(1 \rightarrow 1)
\label{decay}
\end{bmatrix},
\ee
which we can compare with the standard one-qubt map from the Linblad generator (taken from \cite{Roga2010})
\be
\label{decay1}
\begin{bmatrix}
1-(1-E^{-\frac{t}{T_2}})p(1) & 0 & 0 & (1-E^{-\frac{t}{T_2}})p(0)\\
0 & E^{-\frac{t}{T_1}} & 0 & 0\\
0 & 0 & E^{-\frac{t}{T_1}} & 0\\
(1-E^{-\frac{t}{T_2}})p(1) & 0 & 0 & 1-(1-E^{-\frac{t}{T_2}})p(0)\\
\end{bmatrix}.
\ee
We can then assume an exponential decay of the diagonal terms, although not the off-diagonal elements (coherences). We can thus modify the above matrix from Eq. \eqref{decay1} to
\be
\begin{bmatrix}
1-(1-E^{-\frac{t}{T_2}})p(1) & 0 & 0 & (1-E^{-\frac{t}{T_2}})p(0)\\
0 & \sqrt{p(0 \rightarrow 0)p(1 \rightarrow 1)} & 0 & 0\\
0 & 0 & \sqrt{p(0 \rightarrow 0)p(1 \rightarrow 1)} & 0\\
(1-E^{-\frac{t}{T_2}})p(1) & 0 & 0 & 1-(1-E^{-\frac{t}{T_2}})p(0)\\
\label{decay2}
\end{bmatrix}.
\ee
Comparing the matrix from Eq. \eqref{decay} with that from Eq.\eqref{decay2}, we see that
\be
p(1 \rightarrow 0) = p(1)(1 - \E^{-\frac{t}{T_2}}),\\
p(0 \rightarrow 1) = p(0)(1 - \E^{-\frac{t}{T_2}}),
\ee 
And now, we can proceed as previously, namely as in Eqs. \eqref{uc} and \eqref{decaytime} to compute the damping factor $\kappa = \sqrt{p(0 \rightarrow 0)p(1 \rightarrow 1)}$ in terms of the relaxation time $T_2$. 

Let us now try to compare our results with that from open system dynamics. There, the coherences decay (exponentially) with respect to the time $T_1$, but we have found that this exponential decay is not necessary, since we obtain decay of coherences in terms of the time $T_2$. This suggest that there may be a way to engineer interactions which do not have $T_1$ times, but instead have a longer persisting coherence (recall that in our optimal processes, we have 
non-vanishing of the coherences). Also in a Markov process, one continuously evolves towards a Gibbs state (stationary state), which obviously have no coherences, where in our process, we can evolve towards a state that has the Gibbs' distribution on the diagonal (the same diagonal elements as the Gibbs state), but also have some non-zero off-diagonal elements - coherences. This difference partly comes from the fact that under Thermal Operations, 
we are applying a unitary which lasts some finite time, rather than some persistent interaction (as in a Markov process, and in principle, in open systems). Still, one could imagine repeating the Thermal Operations map, over and over, since the diagonals will continue to exponentially decay in each application of the map and coherences will be damped but not completely destroyed.

On the other hand, if we assume that both the diagonal of the density matrix, and its coherences decay exponentially,
then the dynamics is a semigroup $E^{Lt}$, where $L$ is the Lindblad operator, and we known expressions for relaxation times $T_1$ (longitudinal relaxation) and $T_2$ for semigroups and the relation between them: $2T_1 \geq T_2$. However, Thermal Operations do not form a semigroup, it is a wider class of operations, thus to have a full interpretation, going beyond our simplified scenario, one need to consider unitaries (that commute with the total Hamiltonian H) of the form $U = E^{-iHt}$ to have time explicitly in the formalism (and then, we can interpret what we mean by $T_1$ and $T_2$). This question goes beyond the scope of this paper, because our results concern the question of "whether" one can go from state 1 to state 2 (a question related to the second laws of thermodynamics), while the question of $T_1$ and $T_2$ concern themselves with the rate at which one goes from state 1 to state 2 (more related to questions in the spirit of the third law of thermodynamics (which concerns itself with how quickly you can go to the ground state).  We hope that this can open a new route of studies in engineering of thermodynamical processes. 

\subsection{No-go for higher dimensional systems?}
\label{beyond}
In this section we present results conjecturing that for higher dimension states $(d>2)$ it is impossible to saturate the bounds for processing of coherences. We do this by considering a class of Gibbs-preserving processes called quasi-cycles from which we choose one particular as our case-study example. The quasi-cycle can be defined as follows, we choose an order of levels and put them on a circle, fixing a direction. The aim of the process is to take all states from the group of states with the largest energy and shift them to the states with the energy level in the chosen direction.

Our case-study example is a three-level system and a quasi-cycle from Figure~\ref{fig:diag5}.
\begin{figure}[ht]
\begin{center}
\includegraphics[scale=0.25]{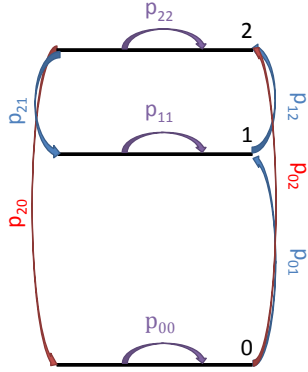}
\end{center}
\caption{\textbf{(Color online)} A $0 \rightarrow 1 \rightarrow 2$ three-level quasi-cycle (with energy level 0, 1 and 2) for which it is showed that the fundamental limit for coherences transport can not be reached. The quasi-cycle is set so forbidden transitions are $2\rightarrow 2$, $0 \rightarrow 2$, $2 \rightarrow 1$ and $1\rightarrow 0$, which means that their corresponding probabilities $p_{22}, p_{21}, p_{10}$ and $p_{02}$ are equal to zero, and the probability of transition $2 \rightarrow 0$, $p_{20}$, is equal to 1.}
\label{fig:diag5}
\end{figure}

At the beginning, let us show that there is a least one family of initial and final states (with given diagonal elements) for which the realization of the quasi-cycle from Fig.~\ref{fig:diag5} is unique.
\begin{fact}
\label{effect}
No other process has the same effect for a qutrit-to-qutrit transition between a family of states with given diagonal elements $(0,\frac{1}{2},\frac{1}{2}) \rightarrow (\frac{\E^{-\beta \Delta E_{21}}}{2}, \frac{1- \E^{-\beta \Delta E_{21}} + \E^{-\beta \Delta E_{20}}}{2}, \frac{1 - \E^{-\beta \Delta E_{20}}}{2})$ as the quasi-cycle from Fig. \ref{fig:diag5}.
\end{fact}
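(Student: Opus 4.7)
\textit{Proof plan.} The plan is to show that the stochastic transition matrix realising the prescribed diagonal-to-diagonal map is pinned down uniquely by combining row stochasticity, Gibbs preservation, the output equation, and entrywise positivity; verifying that the quasi-cycle of Fig.~\ref{fig:diag5} satisfies all four conditions (by direct substitution in terms of $\E^{-\beta \Delta E_{20}}$ and $\E^{-\beta \Delta E_{21}}$) then identifies it as the unique realisation.

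For uniqueness, I would take a generic row-stochastic $P=(p_{ij})_{0\le i,j\le 2}$ with Gibbs weights $g_i=\E^{-\beta E_i}$ and impose simultaneously: (a) $\sum_j p_{ij}=1$ for each row, (b) $\sum_i g_i\,p_{ij}=g_j$ for each column (Gibbs preservation), and (c) $\sum_i \pi_i\,p_{ij}=q_j$ for the given initial diagonal $\pi$ and target $q$. A dimension count shows that among the nine unknowns there are three row-sum relations, two independent Gibbs-preservation relations (the third being implied), and two independent output relations, leaving a two-parameter affine family of candidate solutions. I would parametrise this family by two of the entries and write every remaining $p_{ij}$ as an explicit affine function of the two parameters using the Gibbs equations column by column.

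The heart of the argument is positivity. Combining Gibbs preservation at the column corresponding to the energy level whose initial population vanishes with the output relation in the same column yields an identity of the form $p_{kj}=1+\gamma\,p_{k'j}$ with $\gamma>0$, the coefficient being the strictly positive difference of two Gibbs weights divided by the third. Together with the bound $p_{kj}\le 1$ coming from the third row sum, this forces the extremal choice $p_{kj}=1$ and simultaneously $p_{k'j}=0$; the corresponding row sum then zeroes the two remaining entries of that row. Back-substitution into the Gibbs equation at the next column fixes $p_{11}=1-\E^{-\beta\Delta E_{21}}$ and $p_{12}=\E^{-\beta\Delta E_{21}}$, and the final column determines the first row as $p_{00}=1-\E^{-\beta\Delta E_{20}}$, $p_{01}=\E^{-\beta\Delta E_{20}}$, $p_{02}=0$, reproducing the quasi-cycle of Fig.~\ref{fig:diag5} verbatim.

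The main obstacle is not conceptual but organisational: one must identify which of the several positivity inequalities becomes binding first and then track the resulting cascade of forced equalities through the remaining columns without sign errors. No structural input beyond the strict ordering of the Gibbs weights is required; geometrically, the quasi-cycle sits at a vertex of the polytope of Gibbs-preserving stochastic maps that implement the transition, and the combination of Gibbs preservation with entrywise positivity is exactly tight enough to make that vertex the unique feasible point.
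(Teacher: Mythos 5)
Your proposal is correct, and it is in fact logically tighter than the argument the paper gives. The paper proceeds in the opposite order: Lemma~\ref{fixed} \emph{assumes} the quasi-cycle's zero pattern $p(2\to2)=p(2\to1)=p(1\to0)=p(0\to2)=0$, derives the remaining five entries from the row sums and Gibbs preservation alone, then verifies by the displayed matrix--vector product that this map sends $(0,\tfrac12,\tfrac12)$ to the target diagonal, and asserts uniqueness with the remark that ``there is no freedom in changing $p(i\to j)$.'' As written, that only shows rigidity \emph{within} the quasi-cycle ansatz; it does not rule out a Gibbs-preserving stochastic matrix with a different support achieving the same diagonal transition. Your route --- counting $9-3-2-2=2$ residual affine parameters and then showing that entrywise positivity collapses the feasible set to a point --- is exactly what the ``no other process'' claim requires, and your key identity checks out: with $g_0>g_1>g_2$, the column-$0$ Gibbs equation minus the column-$0$ output equation gives $p(2\to0)=1+\frac{g_0-g_1}{g_2}\,p(1\to0)$, forcing $p(2\to0)=1$, $p(1\to0)=0$, whence row $2$ vanishes elsewhere and the remaining columns reproduce Eq.~\eqref{finalProb} verbatim. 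One small wording slip: you attribute this binding identity to ``the column corresponding to the energy level whose initial population vanishes''; the vanishing population (level $2$) indexes the \emph{row} of the forced entry $p(2\to0)$, while the identity lives in the column of the ground level (the level-$2$ column independently forces $p(0\to2)=0$, $p(1\to2)=\E^{-\beta\Delta E_{21}}$ by a squeeze between $p(1\to2)\ge \E^{-\beta\Delta E_{21}}$ and $p(1\to2)\le \E^{-\beta\Delta E_{21}}$, so either column can start the cascade). This is a labelling issue only; the mathematics is sound, and your vertex-of-the-polytope picture is the right way to understand why uniqueness holds here but would fail for a generic initial diagonal.
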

Before proving this, let us observe the following
\begin{lemma}
\label{fixed}
In the quasi-cycle from Fig.~\ref{fig:diag5} all probabilities are constrained and fixed. The are
\be
\begin{split}
\label{finalProb}
&p(2 \rightarrow 2)=0, p(2 \rightarrow 1) = 0, p(2 \rightarrow 0) = 1, \\
&p(1 \rightarrow 2) = \E^{-\beta \Delta E_{21}}, p(1 \rightarrow 1) = 1- \E^{-\beta \Delta E_{21}}, p(1 \rightarrow 0) = 0, \\
&p(0 \rightarrow 2)=0, p(0 \rightarrow 1)= \E^{-\beta \Delta E_{20}}, p(0 \rightarrow 0) = 1 - \E^{-\beta \Delta E_{20}}.
\end{split}
\ee
\end{lemma}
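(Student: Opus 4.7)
The plan is to determine every entry of the $3\times 3$ stochastic matrix $P=[p(i\to j)]$ from the combination of three ingredients: (i) the four vanishing transitions and the forced value $p(2\to 0)=1$ imposed by the definition of the quasi-cycle; (ii) row-stochasticity $\sum_j p(i\to j)=1$ inherited from the unitarity structure (as in Eq.~\eqref{uc}); and (iii) preservation of the Gibbs state by any thermal (or ETO) channel, which at the level of the classical transition matrix reads
\be
\label{gibbsprop}
\sum_i p(i\to j)\,e^{-\beta E_i}=e^{-\beta E_j}\qquad\text{for every }j.
\ee
This is exactly the stationarity condition for the column of $P$ associated with energy $E_j$, and it is the only way the diagonal of a Gibbs state can be mapped to itself.

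First I would zero out the forbidden entries and substitute $p(2\to 0)=1$ into \eqref{gibbsprop}. Applying \eqref{gibbsprop} to the column $j=0$ leaves only the terms with $i=0$ and $i=2$, and solving gives $p(0\to 0)=1-e^{-\beta\Delta E_{20}}$; row-stochasticity on row $0$ then yields $p(0\to 1)=e^{-\beta\Delta E_{20}}$. Next, applying \eqref{gibbsprop} to the column $j=1$, only the $i=0$ and $i=1$ terms survive (since $p(2\to 1)=0$), and substituting the value of $p(0\to 1)$ just obtained isolates $p(1\to 1)$; a short calculation gives $p(1\to 1)=1-e^{-\beta\Delta E_{21}}$, and row-stochasticity on row $1$ then forces $p(1\to 2)=e^{-\beta\Delta E_{21}}$. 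The column $j=2$ equation is automatically satisfied (it reduces to $e^{-\beta E_2}=e^{-\beta E_2}$), which serves as a consistency check rather than a new constraint.

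At this point every entry of $P$ has been determined uniquely, matching the list in Eq.~\eqref{finalProb}. I would close the proof by remarking that the solution is consistent (all entries lie in $[0,1]$ since $\Delta E_{20},\Delta E_{21}>0$) and that no further freedom remains, so the transition probabilities of the quasi-cycle are completely rigid. The only mild subtlety is to justify why \eqref{gibbsprop} is exactly the correct form of Gibbs preservation on the diagonal of the channel; this follows directly from the fact that under TO (and more generally ETO) the diagonal evolves independently (Eqs.~\eqref{main1}--\eqref{main2}) via a stochastic matrix whose fixed point is the Gibbs distribution. No truly hard step arises---the lemma is essentially a counting argument showing that the four forbidden entries plus the three Gibbs-column equations plus row-stochasticity already over-determine the remaining five unknowns to a unique solution.
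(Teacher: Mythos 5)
Your proposal is correct and uses essentially the same ingredients as the paper's proof: the four forbidden transitions, row-stochasticity of the transition matrix, and Gibbs-state preservation written as a stationarity condition on each column, solved by elimination to pin down the remaining entries uniquely. The column-by-column substitution you give is just a cleaner organization of the same linear system the paper writes out and solves, so no further comment is needed.
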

\begin{proof}
Let us start with writing conditions for probabilities of level transitions coming from the unitarity constrains
\be
\label{un1}
\begin{split}
&p(2 \rightarrow 2) + p(2 \rightarrow 1) + p(2 \rightarrow 0) = 1, \\
&p(1 \rightarrow 2) + p(1 \rightarrow 1) + p(1 \rightarrow 0) = 1, \\
&p(0 \rightarrow 2) + p(0 \rightarrow 1) + p(0 \rightarrow 0) = 1.
\end{split}
\ee
From the preservation of a Gibbs state, we also have that
\be
\label{un2}
\begin{split}
&p(2 \rightarrow 2) + p(1 \rightarrow 2)\E^{-\beta \Delta E_{12}} + p(0 \rightarrow 2)\E^{-\beta \Delta E_{02}} = 1, \\
&p(2 \rightarrow 1)\E^{-\beta \Delta E_{21}} + p(1 \rightarrow 1) + p(0 \rightarrow 1)\E^{-\beta \Delta E_{01}} = 1, \\
&p(2 \rightarrow 0)\E^{-\beta \Delta E_{20}} + p(1 \rightarrow 0)\E^{-\beta \Delta E_{10}} + p(0 \rightarrow 0) = 1,
\end{split}
\ee
where $\Delta E_{ij}$ is an energy difference between levels $i$ and $j$ of a qutrit.
Comparing Eq. \eqref{un1} with \eqref{un2} we obtain that
\be
\label{un3}
\begin{split}
&p(1 \rightarrow 2)\E^{-\beta \Delta E_{12}} + p(0 \rightarrow 2)\E^{-\beta \Delta E_{02}} = p(2 \rightarrow 1) + p(2 \rightarrow 0), \\
&p(2 \rightarrow 1)\E^{-\beta \Delta E_{21}} + p(0 \rightarrow 1)\E^{-\beta \Delta E_{01}} = p(1 \rightarrow 2) + p(1 \rightarrow 0), \\
&p(2 \rightarrow 0)\E^{-\beta \Delta E_{20}} + p(1 \rightarrow 0)\E^{-\beta \Delta E_{10}} = p(0 \rightarrow 2) + p(0 \rightarrow 1).
\end{split}
\ee
For our quasi-cycle, $p(2 \rightarrow 2), p(2 \rightarrow 1), p(1 \rightarrow 0)$, and $p(0 \rightarrow 2) = 0$, which immediately imposes $p(2 \rightarrow 0) = 1$. Then, inserting it into Eqs (\ref{un1}, \ref{un2}, \ref{un3}) and solving them, we get that all other probabilities are fixed too and given by
\be
\begin{split}
&p(2 \rightarrow 2)=0, p(2 \rightarrow 1) = 0, p(2 \rightarrow 0) = 1, \\
&p(1 \rightarrow 2) = \E^{-\beta \Delta E_{21}}, p(1 \rightarrow 1) = 1- \E^{-\beta \Delta E_{21}}, p(1 \rightarrow 0) = 0, \\
&p(0 \rightarrow 2)=0, p(0 \rightarrow 1)= \E^{-\beta \Delta E_{20}}, p(0 \rightarrow 0) = 1 - \E^{-\beta \Delta E_{20}}.
\end{split}
\ee
q.e.d. It implies that there is no freedom in choosing the rest of probabilities, the ones that are set to 0 and 1 already constrain and fix the rest.
\end{proof}
With the above, the proof of Fact \ref{effect} is quite straightforward. From the relation $\sum_i d_i p(i \rightarrow j) = d_j$ \cite{Horodecki_2013thermo}, we can build a stochastic matrix with probabilities $p(i \rightarrow j)$, which tells us whether, under a given input, the Gibbs state is preserved on the diagonal of a state. This effectively tells us which state-to-state transformations are possible under a given quasi-cycle from the point of view of their diagonal inputs (preservation of the Gibbs state). For our state we have,
$$\begin{bmatrix}
p(2 \rightarrow 2) & p(2 \rightarrow 1) & p(2 \rightarrow 0)\\
p(1 \rightarrow 2) & p(1 \rightarrow 1) & p(1 \rightarrow 0)\\
p(0 \rightarrow 2) & p(0 \rightarrow 1) & p(0 \rightarrow 0)\\
\end{bmatrix} \begin{bmatrix}
0\\
\frac{1}{2}\\
\frac{1}{2}\\
\end{bmatrix} = \begin{bmatrix}
\frac{\E^{-\beta \Delta E_{21}}}{2}\\
\frac{1- \E^{-\beta \Delta E_{21}} + \E^{-\beta \Delta E_{20}}}{2}\\
\frac{1 - \E^{-\beta \Delta E_{20}}}{2}
\end{bmatrix},$$ where, to obtain the final values, we put the probabilities from Eq. \eqref{finalProb}.

Due to the constrains on the matrix from the proof of Lemma \ref{fixed} (and the unitary matrix, constructed later in the txt, from $p(i \rightarrow j)$ from Eq. \eqref{uni1}), there is no freedom in changing $p(i \rightarrow j)$, which proves the uniqueness.

Before going further with the analyze of the state-to-state transitions, let us recall some auxiliary lemma proved by von Neumann \cite{vNeumann_1937matrix} and Fan \cite{Fan_1951matrix}, which will appear to be crucial in our further considerations. The lemma gives a maximization over $\tr XY^{\dagger}$, where $X,Y$ are some matrices, with respect to all possible rotations over $X$ and $Y$.
\begin{lemma}
\label{byNeumann}
If $X$ and $Y$ are $n \times n$ complex matrices, $W$ and $V$ are $n \times n$ unitary matrices, and $\sigma_1 \geq \cdots \geq \sigma_n \geq 0$ denotes ordered singular values, then
\be
\label{N1}
|\tr WXVY|\leq \sum_{i=1}^n\sigma_i(X)\sigma_i(Y)
\ee
and
\be
\label{N2}
\mathop{\operatorname{sup}}\limits_{W,V} |\tr WXVY|=\sum_{i=1}^n \sigma_i(X)\sigma_i(Y).
\ee
\end{lemma}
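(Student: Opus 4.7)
The plan is to reduce the inequality to a scalar bound via singular value decompositions and then apply the Birkhoff--von Neumann theorem together with the rearrangement inequality.

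First, I would use the SVDs $X=U_1 D_X V_1^{\dagger}$ and $Y=U_2 D_Y V_2^{\dagger}$, where $D_X=\diag(\sigma_1(X),\ldots,\sigma_n(X))$ and analogously $D_Y$. By cyclicity of the trace,
\begin{equation}
\tr(WXVY)=\tr(A\,D_X\,B\,D_Y),\qquad A:=V_2^{\dagger}WU_1,\ B:=V_1^{\dagger}VU_2,
\end{equation}
with $A,B$ again unitary. Expanding entrywise gives
\begin{equation}
\tr(A D_X B D_Y)=\sum_{i,j}M_{ij}\,\sigma_j(X)\,\sigma_i(Y),\qquad M_{ij}:=A_{ij}B_{ji}.
\end{equation}

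Next, I would show that the entrywise modulus matrix $|M|$ is doubly sub-stochastic. Cauchy--Schwarz together with the unitarity of $A$ and $B$ gives
\begin{equation}
\sum_{j}|M_{ij}|\le\Bigl(\sum_j|A_{ij}|^2\Bigr)^{1/2}\Bigl(\sum_j|B_{ji}|^2\Bigr)^{1/2}=1,
\end{equation}
and the analogous bound for column sums, since each row of $A$ and each column of $B$ is a unit vector. It then follows (either by padding $|M|$ to a $2n\times 2n$ doubly stochastic matrix and applying the classical Birkhoff theorem, or by a direct convex-decomposition argument) that $|M|=\sum_k\lambda_k P_k$ is a convex combination of partial permutation matrices with $\sum_k\lambda_k\le 1$. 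Because the sequences $\{\sigma_i(X)\}$ and $\{\sigma_i(Y)\}$ are both non-negative and sorted in decreasing order, the rearrangement inequality yields $\sum_i\sigma_{\pi(i)}(X)\sigma_i(Y)\le\sum_i\sigma_i(X)\sigma_i(Y)$ for every (partial) permutation $\pi$. Combining,
\begin{equation}
|\tr(WXVY)|\le\sum_{i,j}|M_{ij}|\sigma_j(X)\sigma_i(Y)\le\sum_i\sigma_i(X)\sigma_i(Y),
\end{equation}
which establishes \eqref{N1}.

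Finally, for the supremum \eqref{N2} it suffices to exhibit saturating unitaries: the choice $W:=V_2 U_1^{\dagger}$ and $V:=V_1 U_2^{\dagger}$ forces $A=B=I$, so that $\tr(WXVY)=\tr(D_X D_Y)=\sum_i\sigma_i(X)\sigma_i(Y)$, matching the upper bound. The main subtlety I expect is the sub-stochastic version of Birkhoff (the classical statement is stated for doubly stochastic matrices), but it is standard and handled cleanly by the padding trick described above; all remaining steps reduce to routine linear algebra and the rearrangement inequality.
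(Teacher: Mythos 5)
Your proof is correct. Note, however, that the paper does not actually prove this lemma: it is quoted as a known auxiliary result with citations to von Neumann and Fan, so there is no in-paper argument to compare against. What you have written is essentially the standard (Mirsky-style) proof of von Neumann's trace inequality, and all the steps check out: the reduction $\tr(WXVY)=\sum_{i,j}A_{ij}B_{ji}\,\sigma_j(X)\sigma_i(Y)$ via the two SVDs is right; the Cauchy--Schwarz estimate showing that the modulus matrix $|M|$ with $M_{ij}=A_{ij}B_{ji}$ has row and column sums at most $1$ is right (rows of $A$ and columns of $B$ versus columns of $A$ and rows of $B$); the passage from doubly substochastic matrices to convex combinations of partial permutation matrices via the $2n\times 2n$ padding $\left(\begin{smallmatrix} S & I-D_r\\ I-D_c & S^{T}\end{smallmatrix}\right)$ and classical Birkhoff is standard; and the rearrangement inequality for two non-negative, similarly ordered sequences finishes \eqref{N1}, with the choice $W=V_2U_1^{\dagger}$, $V=V_1U_2^{\dagger}$ giving attainment in \eqref{N2}. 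One cosmetic point: the decomposition of a doubly substochastic matrix is usually stated as a genuine convex combination ($\sum_k\lambda_k=1$) of partial permutation matrices rather than $\sum_k\lambda_k\le 1$, but either normalization feeds into your final estimate, so nothing breaks. Your write-up could serve as a self-contained replacement for the citation.
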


We are ready now to summarize our findings in the following
\begin{lemma}
\label{nogo}
Consider a unitary matrix $U = \bigoplus_k U_k$, written in the block form, where for a fixed block $k$, one has
\newcommand*{\tempbbq}{\multicolumn{1}{|c}{u^k_{(10)}}}
\newcommand*{\tempddq}{\multicolumn{1}{|c}{u^k_{(20)}}}
\newcommand*{\tempbbb}{\multicolumn{1}{|c}{u^k_{(11)}}}
\newcommand*{\tempddd}{\multicolumn{1}{|c}{u^k_{(21)}}}
\newcommand*{\tempoo}{\multicolumn{1}{|c}{u^k_{(01)}}}
\newcommand*{\tempxx}{\multicolumn{1}{|c}{u^k_{(00)}}}
\be
U_k=\begin{pmatrix} u^k_{(22)} & \tempddd & \tempddq \\ \hline u^k_{(12)} & \tempbbb & \tempbbq \\ \hline u^k_{(02)} &\tempoo & \tempxx \end{pmatrix},
\ee where, for each $k$, $u^k_{(ij)}$ is a matrix of dimension $d_i \times d_j$. Assuming that the dimensions are such that $d^k_0 > d^k_1 > d^k_2$, and $u^k_{(22)} = 0, u^k_{(21)} = 0, u^k_{(10)} = 0, u^k_{(02)} = 0$, and $\tr u_{00} u^{\dagger}_{00} \neq \tr u_{11} u^{\dagger}_{11}$, there is no such a $U$ that saturates the Cauchy-Schwarz inequality of the form $\tr u_{(00)}^k {u_{(11)}^{k+l}}^{\dagger} \leq \sqrt{\tr u_{(00)}^k {u_{(00)}^{k}}^{\dagger} \tr u_{(11)}^{k+l} {u_{(11)}^{k+l}}^{\dagger}}$, where $l$ is an integer, such that $d^k_0 = d^{k+l}_1$, one always has a strict inequality.
\end{lemma}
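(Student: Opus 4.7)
The plan is to show that the block unitarity of $U_k$ and $U_{k+l}$, together with the quasi-cycle sparsity pattern, forces both $u_{(00)}^k$ and $u_{(11)}^{k+l}$ to be partial isometries whose non-zero singular values are all equal to one. The equality case of the Hilbert--Schmidt Cauchy--Schwarz inequality demands proportionality of the two matrices; this proportionality is then shown to be incompatible with the trace hypothesis, because proportional partial isometries with $\{0,1\}$-spectrum must have the same trace.

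\emph{Partial-isometry structure.} From the $(2,2)$-block of $U_k U_k^\dagger = I$, the vanishing of $u_{(22)}^k$ and $u_{(21)}^k$ leaves $u_{(20)}^k (u_{(20)}^k)^\dagger = I_{d_2^k}$, so $u_{(20)}^k$ is a co-isometry. The matching $(0,0)$-block of $U_k^\dagger U_k = I$ reduces to
\[
(u_{(20)}^k)^\dagger u_{(20)}^k + (u_{(00)}^k)^\dagger u_{(00)}^k = I_{d_0^k}.
\]
The first term is the rank-$d_2^k$ orthogonal projection onto the range of $(u_{(20)}^k)^\dagger$, so the second term is the complementary orthogonal projection of rank $d_0^k - d_2^k$. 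Hence $u_{(00)}^k$ is a partial isometry with singular values in $\{0,1\}$, and $\tr u_{(00)}^k(u_{(00)}^k)^\dagger = d_0^k - d_2^k$. An entirely parallel argument applied to the $(k+l)$-th block (using $(U_{k+l}^\dagger U_{k+l})_{(2,2)}$ to recognise $u_{(12)}^{k+l}$ as an isometry, and then the $(1,1)$-block of $U_{k+l}U_{k+l}^\dagger = I$) yields $u_{(11)}^{k+l}(u_{(11)}^{k+l})^\dagger = I_{d_1^{k+l}} - u_{(12)}^{k+l}(u_{(12)}^{k+l})^\dagger$, a projection of rank $d_1^{k+l} - d_2^{k+l}$. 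So $u_{(11)}^{k+l}$ is likewise a partial isometry with $\{0,1\}$-spectrum.

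\emph{Saturation forces equal traces.} Suppose towards contradiction that the Cauchy--Schwarz inequality $|\tr AB^\dagger|^2 \le \tr AA^\dagger \cdot \tr BB^\dagger$ is saturated with $A=u_{(00)}^k$ and $B=u_{(11)}^{k+l}$. The standard equality condition forces $u_{(00)}^k = \lambda\, u_{(11)}^{k+l}$ for some $\lambda \in \mathbb{C}$. Both matrices are non-zero, since $d_0^k > d_2^k$ by hypothesis and the analogous strict inequality $d_1^{k+l} > d_2^{k+l}$ is implicit in the non-triviality of the trace assumption. Because the non-zero singular values of each side are all equal to $1$, the proportionality $u_{(00)}^k = \lambda u_{(11)}^{k+l}$ is consistent only when $|\lambda|=1$. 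Taking traces of the relation $u_{(00)}^k (u_{(00)}^k)^\dagger = |\lambda|^2 u_{(11)}^{k+l}(u_{(11)}^{k+l})^\dagger$ then yields $\tr u_{(00)}^k (u_{(00)}^k)^\dagger = \tr u_{(11)}^{k+l}(u_{(11)}^{k+l})^\dagger$, contradicting the hypothesis. Hence the inequality must be strict.

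\emph{Expected obstacle.} The $3 \times 3$ block-unitarity bookkeeping is routine; the real content lies in recognising that the assumed sparsity pattern does not merely bound the rank of the diagonal blocks but actually pins their non-zero singular values to $1$. Once this spectral rigidity is established, the contradiction with the unequal-trace hypothesis is immediate. The only other subtlety is the implicit need for block $k+l$ to inherit the decreasing-dimension pattern $d_1^{k+l} > d_2^{k+l}$, without which $u_{(11)}^{k+l}$ would itself vanish and the statement would be vacuous.
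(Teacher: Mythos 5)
Your proof is correct, and it reaches the paper's central insight --- that the sparsity pattern plus unitarity pins every non-zero singular value of $u_{(00)}^k$ and $u_{(11)}^{k+l}$ to $1$ --- by a genuinely cleaner route. The paper first invokes the von Neumann--Fan trace inequality to reduce the problem to ordered singular-value vectors, conjugates $U_k$ by block-diagonal unitaries built from the SVDs of $u_{(00)}^k$ and $u_{(11)}^k$, and then extracts the $\{0,1\}$ spectrum from orthonormality of the rows and columns of the resulting matrix $\widetilde U_k$; it concludes by noting that two $\{0,1\}$-vectors with different numbers of ones cannot be proportional. You instead read the partial-isometry structure directly off the block entries of $U_kU_k^\dagger=I$ and $U_k^\dagger U_k=I$ (e.g.\ $u_{(20)}^k(u_{(20)}^k)^\dagger=I_{d_2^k}$ forces $(u_{(00)}^k)^\dagger u_{(00)}^k$ to be the complementary projection of rank $d_0^k-d_2^k$), and then apply the Hilbert--Schmidt Cauchy--Schwarz equality condition $u_{(00)}^k=\lambda\,u_{(11)}^{k+l}$, which for two non-zero partial isometries forces $|\lambda|=1$ and hence equal traces, contradicting the hypothesis. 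This buys you two things: you avoid the von Neumann--Fan lemma entirely, and your derivation of the spectral rigidity is an explicit two-line projection identity rather than the paper's somewhat informal row-orthonormality argument; as a bonus you get the exact ranks $d_0^k-d_2^k$ and $d_1^{k+l}-d_2^{k+l}$, which makes transparent why the trace hypothesis is equivalent to $d_2^k\neq d_2^{k+l}$ (i.e.\ $p(0\to0)\neq p(1\to1)$ in the quasi-cycle). Your closing caveat about the degenerate case $d_1^{k+l}=d_2^{k+l}$ (where both sides vanish and the inequality is trivially an equality) is a fair observation that the paper glosses over, though it is excluded by the standing dimension assumptions.
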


Lemma~\ref{nogo} implies the following:
\begin{corollary}
\label{nogo1}
There in no such an energy-preserving unitary $U$ that commutes with the total Hamiltonian of the system-bath setup, where one has a generic heat bath that follows assumptions from Secs \ref{App:not} and \ref{App:add} from Appendix and realizes the state-to-state from transition from Fact \ref{effect} in a precise way (no disturbance and approximations in reaching the final state) that leads to the best possible processing of coherences, which means saturation of the bounds from Proposition \ref{p1} (in that case $|\alpha|=\sqrt{p(0 \rightarrow 0)p(1 \rightarrow 1)}$).
\end{corollary}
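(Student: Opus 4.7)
The plan is to deduce the singular-value spectra of $u^k_{(00)}$ and $u^{k+l}_{(11)}$ exactly from the unitarity of $U_k$ in the presence of the prescribed zero pattern, and then invoke the rigidity of the equality case in Cauchy--Schwarz to contradict the trace hypothesis. The corollary will then follow by checking that the quasi-cycle of Fact~\ref{effect} fixes the relevant traces to genuinely distinct values.

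First I would extract the consequences of $U_k^\dagger U_k = \mathbb I = U_k U_k^\dagger$ in block form. Because $u^k_{(22)} = u^k_{(21)} = 0$, the only nonzero entry in the first block-row of $U_k$ is $u^k_{(20)}$, so row orthonormality forces $u^k_{(20)} (u^k_{(20)})^\dagger = \mathbb I_{d^k_2}$; similarly $u^k_{(10)} = 0$ gives $(u^k_{(12)})^\dagger u^k_{(12)} = \mathbb I_{d^k_2}$. Combining these with the remaining column-orthonormality relations yields the two key identities
\begin{equation*}
(u^k_{(00)})^\dagger u^k_{(00)} = \mathbb I_{d^k_0} - (u^k_{(20)})^\dagger u^k_{(20)},
\end{equation*}
\begin{equation*}
u^{k+l}_{(11)} (u^{k+l}_{(11)})^\dagger = \mathbb I_{d^{k+l}_1} - u^{k+l}_{(12)} (u^{k+l}_{(12)})^\dagger,
\end{equation*}
whose right-hand sides are $\mathbb I$ minus orthogonal projections of rank $d^k_2$ and $d^{k+l}_2$ respectively. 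Hence both $u^k_{(00)}$ and $u^{k+l}_{(11)}$ are partial isometries whose singular values lie in $\{0,1\}$, with exactly $d^k_0 - d^k_2$ and $d^{k+l}_1 - d^{k+l}_2$ of them equal to $1$.

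Second I would use the equality case of the Hilbert--Schmidt Cauchy--Schwarz inequality (the $2\times 2$-minor bound of Eq.~\eqref{bound} that underlies Lemma~\ref{byNeumann}): saturation of
\begin{equation*}
|\tr u^k_{(00)} (u^{k+l}_{(11)})^\dagger| \leq \sqrt{\tr u^k_{(00)} (u^k_{(00)})^\dagger \, \tr u^{k+l}_{(11)} (u^{k+l}_{(11)})^\dagger}
\end{equation*}
occurs iff $u^k_{(00)} = \mu\, u^{k+l}_{(11)}$ for some $\mu \in \mathbb C$, which is dimensionally meaningful since $d^k_0 = d^{k+l}_1$. Such proportionality forces $\sigma_i(u^k_{(00)}) = |\mu|\, \sigma_i(u^{k+l}_{(11)})$ for all $i$. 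Because both spectra live in $\{0,1\}$, one must have either $\mu = 0$ (which zeros out the coherence term and is excluded on the physical side) or $|\mu| = 1$. In the latter case the number of unit singular values must coincide, giving $\tr u^k_{(00)} (u^k_{(00)})^\dagger = \tr u^{k+l}_{(11)} (u^{k+l}_{(11)})^\dagger$ and contradicting the hypothesis $\tr u_{00} u_{00}^\dagger \neq \tr u_{11} u_{11}^\dagger$. This establishes Lemma~\ref{nogo}, and Corollary~\ref{nogo1} then follows by inserting the fixed transition probabilities of Lemma~\ref{fixed} into the identities $d^k_0\, p(0\to 0) = \tr u^k_{(00)} (u^k_{(00)})^\dagger$ and $d^{k+l}_1\, p(1\to 1) = \tr u^{k+l}_{(11)} (u^{k+l}_{(11)})^\dagger$ (derived for qubits in Subsec.~\ref{sec:app} and carried over to qutrits mutatis mutandis); the resulting values $1 - \E^{-\beta \Delta E_{20}}$ and $1 - \E^{-\beta \Delta E_{21}}$ are distinct for generic spectra, so the lemma applies and rules out saturation.

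The principal obstacle I expect is purely bookkeeping: keeping consistent the conventions for block indices, the dimensions $d^k_i$ of the bath degeneracy spaces at shifted total-energy blocks, and the pairing $d^k_0 = d^{k+l}_1$ that arises from the assumption in Appendix~\ref{App:add} that the degeneracies of adjacent total-energy blocks differ by the Bohr shift; one must verify that this pairing really does arise between the subblocks that feed the coherence-transport sum $|\alpha|$. Once that identification is settled, the spectral rigidity of partial isometries with binary spectrum makes the Cauchy--Schwarz saturation impossible essentially automatically, and no additional analytic estimates are required.
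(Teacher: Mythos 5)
Your argument is correct and reaches the paper's conclusion by the same overall skeleton --- (i) unitarity plus the zero pattern of the quasi-cycle forces the diagonal blocks $u^k_{(00)}$ and $u^{k+l}_{(11)}$ to have singular values in $\{0,1\}$, (ii) the numbers of unit singular values are $d^k_0-d^k_2$ and $d^{k+l}_1-d^{k+l}_2$, which differ because $p(0\to0)=1-\E^{-\beta\Delta E_{20}}\neq 1-\E^{-\beta\Delta E_{21}}=p(1\to1)$, and (iii) therefore the Cauchy--Schwarz bound underlying Proposition \ref{p1} cannot be met with equality --- but the two technical ingredients are implemented differently, and arguably more cleanly, than in the paper. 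For step (i) the paper conjugates $U_k$ by block-diagonal unitaries built from the SVDs of $u^k_{(00)},u^k_{(11)}$ to get $\widetilde U_k$ and then argues row by row that each nonzero singular value must be $1$; you instead read off directly from block orthonormality that $u^k_{(20)}$ is a co-isometry and $u^k_{(12)}$ an isometry, so that $(u^k_{(00)})^\dagger u^k_{(00)}$ and $u^{k+l}_{(11)}(u^{k+l}_{(11)})^\dagger$ are identity minus rank-$d_2$ projections, i.e.\ the blocks are partial isometries with the stated ranks --- this is shorter and gives the count of unit singular values for free. (Minor slip: the identity $(u^k_{(12)})^\dagger u^k_{(12)}=\mathbb I_{d^k_2}$ follows from the first block \emph{column} being $(0,u^k_{(12)},0)^{T}$, i.e.\ from $u^k_{(22)}=u^k_{(02)}=0$, not from $u^k_{(10)}=0$; the conclusion is unaffected.) For step (iii) the paper invokes the von Neumann--Fan trace inequality (Lemma \ref{byNeumann}), which has the advantage of quantifying the gap --- the best achievable overlap is $\min(n_0,n_1)$ against the bound $\sqrt{n_0 n_1}$ --- whereas you use only the rigidity of the Hilbert--Schmidt Cauchy--Schwarz equality case ($u^k_{(00)}=\mu\,u^{k+l}_{(11)}$), which suffices to exclude exact saturation (note $\mu=0$ is already excluded because $p(0\to0)>0$, not merely ``on the physical side''). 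Both routes share the same unaddressed bookkeeping point, namely that the global damping factor $\alpha_{01}$ is a $p(E_R)$-weighted sum over energy blocks so that global saturation requires blockwise saturation; since the paper's own proof also treats a single block pair, this is not a gap relative to the paper.
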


To prove Corollary~\ref{nogo1}, one needs to adapt the mathematical structures from Lemma~\ref{nogo} to states transitions under Thermal Operations and conect it with the facts already shown in this section.
\begin{enumerate}
	\item The unitary matrix from Lemma~\ref{nogo} can be treated as an energy-preserving matrix that is used to implement state-to-state transitions under Thermal Operations as in Eq. \eqref{CPTP}.
	\item The channel that realizes state-to-state transition from Fact \ref{effect} is unique, and its transition probabilities $p(i \rightarrow j)$ correspond to terms $\tr u_{ij} u^{\dagger}_{ij}$ from Lemma~\ref{nogo}.
	\item The generic heat bath (laws of degenerations from Secs \ref{App:not} and \ref{App:add}) gives that the dimensions of blocks ($d^k_0 > d^k_1 > d^k_2$) should be strict ineqaulities.
	\item The Cauchy-Schwarz inequality can be idenfity with the bounds from Proposition \ref{p1} (since, in this state-to-state transition, $p(2 \rightarrow 2)$, there is only one bound that one can try to saturate - $|\alpha|=\sqrt{p(0 \rightarrow 0)p(1 \rightarrow 1)}$).
\end{enumerate}

\begin{proof}[{\bf Proof of Lemma~\ref{nogo}}]
Let us fix $k^{\text{th}}$ energy block, then a general unitary transformation has a form
\newcommand*{\tempbbq}{\multicolumn{1}{|c}{u^k_{(10)}}}
\newcommand*{\tempddq}{\multicolumn{1}{|c}{u^k_{(20)}}}
\newcommand*{\tempbbb}{\multicolumn{1}{|c}{u^k_{(11)}}}
\newcommand*{\tempddd}{\multicolumn{1}{|c}{u^k_{(21)}}}
\newcommand*{\tempoo}{\multicolumn{1}{|c}{u^k_{(01)}}}
\newcommand*{\tempxx}{\multicolumn{1}{|c}{u^k_{(00)}}}
\be
\label{uni1}
U_k=\begin{pmatrix} u^k_{(22)} & \tempddd & \tempddq \\ \hline u^k_{(12)} & \tempbbb & \tempbbq \\ \hline u^k_{(02)} &\tempoo & \tempxx \end{pmatrix},
\ee
where numbers in the brackets correspond to transitions between levels of our system. To obtain the result, we need a simpler form of the matrix $U_k$. Thanks to Lemma~\ref{byNeumann} we know that the maximal value of $\tr u_{(00)}^ku_{(11)}^{k+l}$ is equal to $\sum_i \sigma_i\left(u_{(00)}^k\right)\sigma_i\left(u_{(11)}^{k+l}\right)$, where singular values are taken is a non-increasing order. This allows us to consider only singular values of $u_{(00)}^k$ and $u_{(11)}^{k+l}$, because we want to know maximal possible values of trace and compare it with the bound that comes from Proposition \ref{p1}. From the general theory we know that it saturates when either the first or the second vector is a multiple of the other. So, to obtain the result, we have to show that the vector constructed from non-increasing ordered singular values of $u_{(00)}^k$ is not proportional to the vector constructed in the same way from the block $u_{(11)}^{k+1}$. We show this using an explicit form of our quasicycle from Figure~\ref{fig:diag5} and unitary constraints $U_kU_k^{\dagger}=U_k^{\dagger}U_k=\text{\noindent
\(\mathds{1}\)}$. From the form of our quasicycle one can see that blocks which correspond to zero probabilities of transition are represented by zero matrices. Indeed constraints $\tr (u^k_{(ij)}(u^{k}_{(ij)})^{\dagger})=0$ implies that $u^k_{(ij)}=\mathcal{O}$, where $\mathcal{O}$ denotes zero matrix. In our case we have that $u^k_{(22)}=u^k_{(21)}=u^k_{(10)}=u^k_{(20)}=\mathcal{O}$ and the matrix $U_k$ form~\eqref{uni1} looks like
\newcommand*{\tempbbqQ}{\multicolumn{1}{|c}{\mathcal{O}}}
\newcommand*{\tempdddQ}{\multicolumn{1}{|c}{\mathcal{O}}}
\be
\label{uni2}
U_k=\begin{pmatrix} \mathcal{O} & \tempdddQ & \tempddq \\ \hline u^k_{(12)} & \tempbbb & \tempbbqQ \\ \hline \mathcal{O} &\tempoo & \tempxx \end{pmatrix}.
\ee
In the next step we use singular value decomposition (SVD) to $u^k_{(11)}$ and $u^k_{(00)}$. Thanks to this we can write $u_{(11)}^k=A^k_{(11)}\Sigma^k_{(11)}(B^k_{(11)})^{\dagger}$ and $u_{(00)}^k=A^k_{(00)}\Sigma^k_{(00)}(B^k_{(00)})^{\dagger}$, where $A^k_{(ii)},B^k_{(ii)}$ for $i=0,1$ are rectangular, unitary matrices and $\Sigma^k_{(00)}, \Sigma^k_{(11)}$ are square, diagonal matrices with singular values as entries. Using SVD we can define new unitary matrix $\widetilde{U}_k$ which gives us the same probability transitions (since $\tr u_{ij} u^{\dagger}_{ij} = \tr \Sigma_{ij}\Sigma_{ij}$), but it is simpler to analysis:
\newcommand*{\tempdddQQa}{\multicolumn{1}{|c}{(A^k_{(11)})^{\dagger}}}
\newcommand*{\tempdddQQaa}{\multicolumn{1}{|c}{(A^k_{(00)})^{\dagger}}}
\newcommand*{\tempdddQQb}{\multicolumn{1}{|c}{B^k_{(11)}}}
\newcommand*{\tempdddQQbb}{\multicolumn{1}{|c}{B^k_{(00)}}}
\newcommand*{\tempdddQQx}{\multicolumn{1}{|c}{(A^k_{(22)})^{\dagger}u^k_{(20)}B^k_{(00)}}}
\newcommand*{\tempdddQQxx}{\multicolumn{1}{|c}{(A^k_{(00)})^{\dagger}u^k_{(01)}B^k_{(11)}}}
\newcommand*{\tempdddQQs}{\multicolumn{1}{|c}{\Sigma^k_{(00)}}}
\newcommand*{\tempdddQQss}{\multicolumn{1}{|c}{\Sigma^k_{(11)}}}
\be
\label{tildeU}
\begin{split}
\widetilde{U}_k&=\begin{pmatrix} (A^k_{(22)})^{\dagger} & \tempdddQ & \tempdddQ \\ \hline \mathcal{O} & \tempdddQQa & \tempbbqQ \\ \hline \mathcal{O} &\tempdddQ & \tempdddQQaa \end{pmatrix} \begin{pmatrix} \mathcal{O} & \tempdddQ & \tempddq \\ \hline u^k_{(12)} & \tempbbb & \tempbbqQ \\ \hline \mathcal{O} &\tempoo & \tempxx \end{pmatrix} \begin{pmatrix} B^k_{(22)} & \tempdddQ & \tempdddQ \\ \hline \mathcal{O} & \tempdddQQb & \tempbbqQ \\ \hline \mathcal{O} &\tempdddQ & \tempdddQQbb \end{pmatrix}\\
&=\begin{pmatrix} \mathcal{O} & \tempdddQ & \tempdddQQx \\ \hline (A^k_{(11)})^{\dagger}u^k_{(12)}B^k_{(22)} & \tempdddQQss & \tempbbqQ \\ \hline \mathcal{O} &\tempdddQQxx & \tempdddQQs \end{pmatrix}.
\end{split}
\ee
Indeed such transformation  ensures that main blocks of $\widetilde{U}_k$ have diagonal form, namely all diagonal blocks are equal to $\Sigma^k_{(ii)}$.

Now we show that singular values of the main blocks are equal to zero or one. Because now we deal with nonzero probabilities it is obvious that some of the singular values of the main have to be strictly positive. It is also important to mention that without lose of generality we rearrange rows of $\widetilde{U}_k$ in such a way that singular values are in decreasing order. We also interpret rows and columns of matrix $\widetilde{U}_k$ as a vectors $|r_s\>$ and $|c_l\>$ respectively.
Because of unitarity conditions these vectors have to be orthonormal, i.e. $\<r_s|r_l\>=\delta_{sl}$ and $\<c_s|c_l\>=\delta_{sl}$. Let us take now row $|r_l\>$ which posses nonzero singular value $\sigma_l$, for example from main block $\Sigma^k_{(11)}$. Computing scalar product of this vector $|r_l\>$ with any other row vector $|r_j\>$, where $d_k+d_{k-1} < j \leq d_k+d_{k-1}+d_{k-2} $, together with above mentioned condition $\sigma_l \neq 0$ we can conclude that column vector which contains singular value $\sigma_l$ has only one nonzero element which is our $\sigma_l$. Using normalization constraint $\<r_j|r_j\>=1$ we have that $\sigma_l=1$. This same argumentation  can be used to the rest of nonzero singular values and see that only possible values of all singular values are zero or one.

In the last step  we have to show that vector constructed from non-increasing ordered singular values of $u_{(00)}^k$ is not proportional to vector constructed in the same way, but from the block $u_{(11)}^{k+l}$. We know that these two blocks determine different probabilities $p_{00}$ and $p_{11}$ which due to equation~\eqref{finalProb} have to be different. Together with knowledge that all positive singular value are equal to one we can say that vectors constructed in an aforementioned way have different length so they cannot saturate the bound form Eq.~\ref{p1}.
\end{proof}

\begin{remark}
One can notice that one of the possible realization of unitary transformation $U_k$ in Eq.~\eqref{uni1} for the quasi-cycle from Fig.~\ref{fig:diag5} is a realization in the so-called "brute" form. It means that all nonzero elements of $U_k$ are equal to one. Then every such a transformation can be written as a direct sum of permutation matrices
\be
U_k=\left( \bigoplus_l V(\pi)\right) \oplus \text{\noindent
\(\mathds{1}\)}
\ee
for a certain $\pi \in S(3)$. The identity follows from the fact that some of the levels are untouched. Of course, that realization is quite harmful for coherences, and is far from saturating the inequality.
\end{remark}

In the end, we want to state the following conjecture 
\begin{conjecture}
Corrolary \ref{nogo1} is true also if the transition $\rho \rightarrow \sigma$ is realized in the perturbed way, and instead of the final state $\sigma$ one obtains a state $\sigma'$, such that $|\sigma - \sigma'| < \delta$, where $\delta$ is small, i.e. probabilities of transitions $p(i \rightarrow j)$ that previously were equal to 0, now are equal to $p(i \rightarrow j) = \epsilon$, and other probabilities are also respectively change. Summing up, there is no such a channel that realizes the state-to-state transition from Fact \ref{effect} in the most friendly way for coherences, one is not able to reach the fundamental limit for a limimal coherences damping.
\end{conjecture}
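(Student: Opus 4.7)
The plan is to run a continuity argument on top of the proof of Lemma~\ref{nogo} and Corollary~\ref{nogo1}, making precise quantitative estimates for how singular values of the blocks $u^k_{(ii)}$ move under a small perturbation of the channel.

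First, I would redo the linear system \eqref{un1}--\eqref{un3} allowing each of the previously forbidden transitions to carry probability at most $\epsilon$ (rather than exactly $0$), while the Gibbs-preserving and unitarity constraints are kept. Solving this perturbed system gives $p(i\to j) = p_\star(i\to j) + O(\epsilon)$, where $p_\star(i\to j)$ are the unique exact values from Lemma~\ref{fixed}. In particular, the quantities $\tr u^k_{(ii)} (u^k_{(ii)})^\dagger$ and the ``forbidden'' traces $\tr u^k_{(ij)} (u^k_{(ij)})^\dagger$ are perturbed by $O(\epsilon)$. This is the book-keeping step and is routine.

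Next, I would invoke Weyl's inequality for singular values: if a block $u^k_{(ij)}$ that was zero acquires Frobenius norm $O(\sqrt{\epsilon})$, then the singular values of the remaining blocks (in particular $u^k_{(00)}$ and $u^{k+l}_{(11)}$) shift by at most $O(\sqrt{\epsilon})$ compared to the exact case. In the exact case the SVD argument in the proof of Lemma~\ref{nogo} forced every nonzero singular value of $u^k_{(00)}$ and $u^{k+l}_{(11)}$ to equal $1$, with the number of $1$'s equal to $\mathrm{rank}(u^k_{(00)})$ and $\mathrm{rank}(u^{k+l}_{(11)})$ respectively, and these ranks are different because $p_\star(0\to 0)\ne p_\star(1\to 1)$. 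Thus in the perturbed regime the ordered singular value vectors $\vec\sigma(u^k_{(00)})$ and $\vec\sigma(u^{k+l}_{(11)})$ are $O(\sqrt{\epsilon})$-close to two $\{0,1\}$-vectors of different Hamming weights.

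The heart of the argument, and what I expect to be the main obstacle, is turning the qualitative statement ``these vectors are not proportional'' into a quantitative gap in the von Neumann--Fan bound \eqref{N2}. Writing $\vec a=\vec\sigma(u^k_{(00)})$ and $\vec b=\vec\sigma(u^{k+l}_{(11)})$, the Cauchy--Schwarz saturation defect is
\begin{equation}
\|\vec a\|\,\|\vec b\| - \langle \vec a,\vec b\rangle \;=\; \tfrac{1}{2}\,\|\vec a\|\,\|\vec b\|\,\bigl\|\tfrac{\vec a}{\|\vec a\|}-\tfrac{\vec b}{\|\vec b\|}\bigr\|^{2},
\end{equation}
and I would lower-bound the right-hand side by using that the two nearby $\{0,1\}$-vectors differ in at least one coordinate by $1$, while the perturbations in each coordinate are only $O(\sqrt{\epsilon})$. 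This yields a defect bounded below by a constant minus $O(\sqrt\epsilon)$, hence strictly positive for all sufficiently small $\epsilon$.

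Finally I would combine this with Proposition~\ref{p1}: the damping factor $|\alpha_{01}|$ is bounded above by $\tr u^k_{(00)} (u^{k+l}_{(11)})^\dagger$ (maximized over the remaining unitary freedom, which by Lemma~\ref{byNeumann} equals $\langle\vec a,\vec b\rangle$), while the target bound is $\sqrt{p(0\to 0)p(1\to 1)}=\|\vec a\|\,\|\vec b\|/\sqrt{d^k_0 d^{k+l}_1}$ (up to the trivial normalization by block dimensions). The quantitative Cauchy--Schwarz defect above then gives a strict inequality for all $\epsilon<\epsilon_0$, uniform in the choice of unitary realization, which is exactly the conjectured perturbed no-go statement. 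The delicate point will be to keep the estimate uniform across the direct sum over energy blocks, and to confirm that the Gibbs-preservation plus the strict bath degeneracy inequalities $d^k_0>d^k_1>d^k_2$ prevent the rank gap between $u^k_{(00)}$ and $u^{k+l}_{(11)}$ from closing as $\epsilon\to 0$.
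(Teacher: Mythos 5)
You should first be aware that the paper does not prove this statement: it is stated explicitly as a conjecture, and the authors' only supporting evidence is a heuristic search over ``crude'' perturbed unitaries (placing $\sqrt{\epsilon}$ entries on the diagonals of the previously zero blocks), after which they write that ``a deeper analysis is needed to find a counter-example, or to analytically verify our conjecture.'' So you are attempting genuinely new mathematics here, and the question is whether your plan would close the problem. It would not, as written, because of a gap at the step where you invoke Weyl's inequality. Weyl (or Mirsky/Hoffman--Wielandt) controls the singular values of a \emph{fixed} matrix under an \emph{additive} perturbation of that matrix. But the perturbed channel is not obtained by adding a small matrix to the exact-case unitary: the only constraints are that the aggregate transition probabilities $p(i\to j)$, i.e.\ the quantities $\tr u^k_{(ij)}(u^k_{(ij)})^{\dagger}$ suitably averaged, are within $O(\epsilon)$ of their exact values. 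The set of block unitaries satisfying this is not a small neighbourhood of the exact-case solutions, so there is no a priori reason why $\vec\sigma(u^k_{(00)})$ and $\vec\sigma(u^{k+l}_{(11)})$ should be $O(\sqrt{\epsilon})$-close to $\{0,1\}$-vectors. The property that every nonzero singular value equals $1$ was derived in Lemma~\ref{nogo} from exact orthogonality of rows against columns passing through \emph{exactly zero} off-blocks; what you actually need is a quantitative replacement, e.g.\ showing that unitarity of the perturbed block forces something like $\sigma_l^2(1-\sigma_l^2)=O(\epsilon)$ for each singular value, i.e.\ a dichotomy ``each $\sigma_l$ is either $O(\sqrt\epsilon)$ or $1-O(\sqrt\epsilon)$,'' with the count of near-unit singular values pinned to $d^k_0\,p(0\to0)$ and $d^{k+l}_1\,p(1\to1)$ respectively. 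That is the missing core of the argument; once you have it, your Cauchy--Schwarz defect computation does give an $O(1)$ gap (essentially $\sqrt{p(0\to0)p(1\to1)}-\min\{p(0\to0),p(1\to1)\}$), which is the right shape of conclusion.

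A second, related problem is the one you flag as ``delicate'' but which is in fact structural: the transition probabilities $p(i\to j)$ are Gibbs-weighted averages over the energy blocks $k$. In the exact case $p(i\to j)=0$ forces every block contribution to vanish because each contribution is nonnegative; in the perturbed case $p(i\to j)=\epsilon$ only bounds the average, so individual blocks $u^k_{(ij)}$ may carry Frobenius norm much larger than $\sqrt{\epsilon}$ while rare in the Gibbs measure. Your dichotomy and the resulting defect bound must therefore be run per block and then reassembled through the weighted sum defining $\alpha_{01}$, showing that the blocks where the bound could be saturated have total weight $O(\epsilon)$. Finally, note that the book-keeping in your first step is slightly more than routine: since the input state has zero population on level $2$, the output diagonal does not constrain $p(2\to j)$ at all, so closeness of $\sigma'$ to $\sigma$ together with Gibbs preservation yields a low-dimensional \emph{family} of nearby stochastic matrices rather than a point; the uniqueness of Lemma~\ref{fixed} does not survive perturbation and your estimates must be uniform over that family as well.
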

The probabilities look then as follows
\begin{fact}
When we set the probabilities $p(2 \rightarrow 2), p(2 \rightarrow 1), p(1 \rightarrow 0) and p(0 \rightarrow 2)$ (that previously, in the exact state-to-state transition were equal to 0) to be all equal to $\epsilon$, where $\epsilon$ is small, the other probabilities of the perturbed version of the quasi-cycle $2\rightarrow1\rightarrow0$ from Fig. \ref{fig:diag5} are all fixed and equal to
\be
\begin{split}
&p(2 \rightarrow 2)=\epsilon, p(2 \rightarrow 1) = \epsilon, p(2 \rightarrow 0) = 1 - 2\epsilon, \\
&p(1 \rightarrow 2) = \E^{-\beta \Delta E_{21}}(1-\epsilon)-\E^{-\beta \Delta E_{10}}\epsilon, p(1 \rightarrow 1) = (1- \E^{-\beta \Delta E_{21}})(1-\epsilon)-\E^{\beta \Delta E_{10}}\epsilon, p(1 \rightarrow 0) = \epsilon, \\
&p(0 \rightarrow 2)=\epsilon, p(0 \rightarrow 1)= (\E^{-\beta \Delta E_{20}})(1-2\epsilon)+(\E^{-\beta \Delta E_{10}}-1)\epsilon, p(0 \rightarrow 0) = 1 - \epsilon - (1 - \epsilon)\E^{-\beta \Delta E_{20}}+\E^{\beta \Delta E_{10}}\epsilon.
\end{split}
\ee
\end{fact}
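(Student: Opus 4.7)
The plan is to treat the perturbed case as the same linear system analyzed in the proof of Lemma~\ref{fixed}, simply with inhomogeneous boundary values $\epsilon$ substituted for the four previously-forbidden transitions. The row-sum (unitarity) constraints~\eqref{un1} and the column-sum (Gibbs-preservation) constraints~\eqref{un2} remain unchanged; only the four assignments $p(2\to 2) = p(2\to 1) = p(1\to 0) = p(0\to 2) = \epsilon$ differ from the $\epsilon=0$ case treated in Lemma~\ref{fixed}. Since the $\epsilon=0$ system already had a unique solution, the perturbed system is also uniquely determined (small $\epsilon$ only shifts the right-hand sides), and the task reduces to writing down the five remaining entries explicitly.

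First I would apply the row-sum constraints~\eqref{un1}. The $i=2$ equation yields $p(2\to 0) = 1 - 2\epsilon$ directly, while the $i=1$ and $i=0$ equations collapse to
\[
p(1\to 2) + p(1\to 1) = 1 - \epsilon, \qquad p(0\to 1) + p(0\to 0) = 1 - \epsilon,
\]
which will be paired with the column-sum equations to close the system.

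Next I would use the column-$2$ Gibbs equation from~\eqref{un2}. Substituting $p(2\to 2)=\epsilon$ and $p(0\to 2)=\epsilon$ turns it into a single linear equation in $p(1\to 2)$; solving gives the claimed formula, and the $i=1$ row-sum relation then determines $p(1\to 1)$. Symmetrically, the column-$0$ Gibbs equation, with $p(2\to 0)=1-2\epsilon$ and $p(1\to 0)=\epsilon$ substituted, fixes $p(0\to 0)$, and the $i=0$ row-sum relation fixes $p(0\to 1)$.

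Finally, I would verify the derived values against the one remaining equation (the column-$1$ Gibbs condition). This is the redundancy already implicit in Lemma~\ref{fixed}: the Gibbs-weighted sum of the row equations coincides with the sum of the column equations, so the six constraints in~\eqref{un1}--\eqref{un2} have effective rank five, and imposing four boundary values leaves a unique solution. The only real obstacle is bookkeeping — tracking the sign conventions on $\Delta E_{ij}$ consistently and checking that the cycle relation $\Delta E_{21} + \Delta E_{10} = \Delta E_{20}$ reduces the residual column-$1$ check to an identity; once this is done the claimed formulas follow.
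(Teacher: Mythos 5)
Your plan --- impose the three unitarity row sums \eqref{un1} and the three Gibbs-preservation column sums \eqref{un2}, observe that the Gibbs-weighted combination of one set reproduces the other so that only five constraints are independent, pin the four perturbed entries at $\epsilon$, and solve the remaining linear system --- is the right and essentially only route; the paper states this Fact with no proof, so the only benchmark is the analogous computation in Lemma~\ref{fixed}. The genuine gap is that you stop exactly where the work is: you assert that ``solving gives the claimed formula'' without performing the substitution, and the substitution does \emph{not} give the claimed formulas. With the sign convention forced by the unperturbed case ($p(1\to 2)=\E^{-\beta\Delta E_{21}}\leq 1$ requires $\Delta E_{ij}=E_i-E_j$, so that $\E^{-\beta\Delta E_{12}}=\E^{+\beta\Delta E_{21}}$), the column-$2$ equation yields $p(1\to 2)=\E^{-\beta\Delta E_{21}}(1-\epsilon)-\epsilon\,\E^{+\beta\Delta E_{10}}$, not $-\epsilon\,\E^{-\beta\Delta E_{10}}$; the row-$1$ sum then yields $p(1\to 1)=(1-\E^{-\beta\Delta E_{21}})(1-\epsilon)+\epsilon\,\E^{\beta\Delta E_{10}}$, with the opposite sign on the correction from what is printed; and the column-$0$ equation yields $p(0\to 0)=1-(1-2\epsilon)\E^{-\beta\Delta E_{20}}-\epsilon\,\E^{-\beta\Delta E_{10}}$. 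Only $p(2\to 0)=1-2\epsilon$ and $p(0\to 1)=(1-2\epsilon)\E^{-\beta\Delta E_{20}}+(\E^{-\beta\Delta E_{10}}-1)\epsilon$ survive as printed.

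The failure is detectable with no sign bookkeeping at all: the printed row-$1$ entries sum to $(1-\epsilon)-\epsilon\,\E^{-\beta\Delta E_{10}}-\epsilon\,\E^{\beta\Delta E_{10}}+\epsilon=1-\epsilon\left(\E^{\beta\Delta E_{10}}+\E^{-\beta\Delta E_{10}}\right)<1$, so they violate the very unitarity constraint \eqref{un1} from which they are supposed to follow, and the printed row-$0$ entries likewise do not sum to one. Your final step (``verify the derived values against the one remaining equation'') is precisely the check that would have exposed this, but you describe it rather than perform it. To close the proof you must write out the five solved entries explicitly and either correct the Fact's formulas or record the discrepancy; the corrected values above do satisfy all six equations of \eqref{un1}--\eqref{un2}, the residual column-$1$ identity reducing to the cycle relation $\Delta E_{21}+\Delta E_{10}=\Delta E_{20}$ exactly as you anticipated.
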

This construction is sufficient to search for a counterexample (that in this quasi-cycle, the previously impossible saturation of the bound for coherences is possible).
We examine many construction, which should be the most crude forms of a perturbation of the unitary matrix, ie, in blocks $u^k_{ij}$ of U that correspond to probabilities equal to zero, we put some number of perturbation represented by $\sqrt{\epsilon}$ on a diagonal of the block, and this construtions always lead us to the previously considered case (the one without the perturbation), since the matrix elements of the unitary matrix U are then equal to 0 or 1, or are $\epsilon$-closed. Of course, a deeper analysis is needed to find a counter-example, or to analytically verify our conjecture.

\bibliographystyle{apsrev}
\bibliography{rmp15-hugekey,../../common/refthermo}
\end{document}